\let\chapter\section
\newtheorem{theorem}{Theorem}
\newtheorem{lemma}{Lemma}
\newtheorem{corollary}{Corollary}
\newtheorem{remark}{Remark}
\newtheorem{claim}{Claim}
\newtheorem{definition}{Definition}
\newtheorem{assumption}{Assumption}
\newtheorem{example}{Example}
\newif\ifdraft \drafttrue
\newif\iffull \fullfalse
\definecolor{DarkGreen}{rgb}{0.1,0.5,0.1}
\definecolor{DarkRed}{rgb}{0.5,0.1,0.1}
\definecolor{DarkBlue}{rgb}{0.1,0.1,0.5}
\definecolor{brown}{rgb}{0.5,0.3,0.1}
\newcommand{\showsection}[1]{}
\newcommand\RR{\mathbb{R}}
\newcommand\cA{\mathcal{A}}
\newcommand\cR{\mathcal{R}}
\begin{document}

\title{Efficiently characterizing games consistent with perturbed equilibrium observations}
\author{Juba ZIANI\thanks{California Institute of Technology, jziani@caltech.edu}
\and Venkat CHANDRASEKARAN\thanks{California Institute of Technology, venkatc@caltech.edu}
\and Katrina LIGETT\thanks{Hebrew University of Jerusalem and California Institute of Technology, katrina.ligett@mail.huji.ac.il}}
\date{March 22, 2017}

\maketitle

\begin{abstract}

We study the problem of characterizing the set of games that are consistent with observed equilibrium play.
Our contribution is to develop and analyze a new methodology based on convex optimization to address this problem for many classes of games and observation models of interest. Our approach provides a \textit{sharp, computationally efficient} characterization of the extent to which a particular set of observations constrains the space of games that could have generated them. This allows us to solve a number of variants of this problem as well as to quantify the power of games from particular classes (e.g., zero-sum, potential, linearly parameterized) to explain player behavior. We illustrate our approach with numerical simulations.

\end{abstract}

\section{Introduction}
This paper considers inference in game theoretic models of complete information. 
More precisely, we study the problem of recovering properties and characterizing parameter values
of the games that are consistent with observed equilibrium play, and provide a simple procedure based on convex optimization to recover both the region of consistent games and properties of said region, in a computationally efficient manner. Further, our approach has the power to compute the size of the region of consistent games, and hence to determine when approximate point identification of the true payoff matrices (or parameter values) is possible.


Our approach differs from most related work in that we depart from the usual distributional assumptions on the observer's knowledge of the unobserved variables and payoff shifters; instead, we adopt the weaker assumption that the unobserved variables belong to a known set. This 
increases the robustness of our approach,
reflecting ideas from the robust optimization literature---see~\cite{BGN2009,BBC2011,BB2012}. Furthermore, our approach is formally computationally efficient and, when implemented, is able to handle games of much larger size than those considered in previous work.


Our approach may be viewed as complementary to a model-driven approach, in that the tools we provide here may be used to objectively evaluate the quality of fit one achieves under certain modeling assumptions. Our approach also allows us to explore a variety of assumptions about the information that might available to an observer of game play, and the effects that these assumptions would have on constraining the space of consistent games.

\subsection{Summary of results}
We consider a setting where, at each step, a finite-action, 2-player game is played, and an observer observes a correlated equilibrium (a more permissive concept than Nash equilibrium) of the game.\footnote{Our framework extends to multi-player games with succinct representations; for clarity, we focus here on the two-player case. See Section~\ref{sec: succinct} for a discussion of succinct multi-player games.} We assume that the games played on each step are closely related, in that each reflects a small perturbation in the payoffs of some underlying game; such perturbations are often referred to as {\em{payoff shifters}}.


In a departure from previous work, we do not make distributional assumptions on the payoff shifters, nor do we make any assumption on how the players decide which equilibrium to play, when multiple equilibria are present. 
Instead, we assume that the observer knows nothing of the equilibrium selection rule, and that the information the observer has on the unobserved payoff shifters is simply that the unobserved payoff shifters belong to a known set (see Section~\ref{sec: model} for more details); this is a significantly weaker assumption than knowing exactly what distribution the payoff shifters are taken from. For example, imagine an analyst observes a routing game every day; the shifts in payoffs may come from a combination of several events such as changes in road conditions, traffic accidents, and work zones, whose potential effects on the costs of paths in a routing game may be difficult to predict and quantify precisely as a probability distribution.

In this setting, we give a computationally efficient characterization of the set of games that are consistent with the observations (Section~\ref{sec:consistent_set}); this set is ``sharp'', in the sense that it does not contain any game that is not consistent with the observations. One of our main new contributions is computational efficiency itself: the pioneering work of Beresteanu, Molchanov, and Molinari~\cite{BMM2011} only checks membership of a game to the set of consistent games, and does so in a manner that is tractable in small games but intractable for larger games---see Section~\ref{sec: related_work} for a more in-depth discussion. 
We also show that our framework accommodates an alternate model wherein the observer learns the expected payoff of each player at each equilibrium he sees; in our routing game example, think of an observer who sees the expected time each player spends in traffic. We refer to this setting as ``partial payoff information,'' and discuss it in Sections~\ref{sec: optim_framework} and~\ref{sec: payoff_info}.

Our second main contribution is our ability to quantify the size of the set of consistent games. We give an efficient algorithm (see Section~\ref{sec:checking_identifiability}, Algorithm~\ref{alg: alg_identification}) that takes a set of observations as  input and 
computes the diameter of the sharp region of consistent games. 
The diameter of the consistent set is of interest to an observer, because it gives him a measure of how sharp the conclusions he can draw from the observations are (the larger the diameter, the less sharp the conclusions), and in particular the diameter quantifies the level of approximate point identification that is achievable in a particular setting. Additionally, in Section~\ref{sec: payoff_info}, Lemmas~\ref{lem: d2-recovery} and \ref{lem: dinf-recovery}, we give structural conditions on the sets of observations that allow for accurate recovery.
We also exhibit examples in which said conditions do not hold, and therefore accurate recovery is not possible. 

We show we can extend our framework (Section~\ref{sec: linear_properties}) to find the set of consistent games when restricted to games with certain linear properties, e.g., zero-sum games, potential games, and games whose utilities can be parametrized by linear functions; this allows us to determine to what extent the observed behavior is consistent with such assumptions on the underlying game.

In Section~\ref{sec: extensions}, we show we can extend our framework to finite games with a large number of players, provided the game has a succinct representation. We further show our framework's potential to deal with games with infinite action spaces, using Cournot competition as an example.


Finally, in Section~\ref{sec:simulations}, we illustrate our approach with simulations, in both a simple entry game (Section~\ref{sim: entry}), and in large Cournot competition games (Section \ref{sim: cournot}).

\subsection{Related work}\label{sec: related_work}
One important modeling issue is whether and why one would ever observe multiple, differing behaviors of a single agent. 
A natural, well-established approach models different observations found in the data as stemming from  random perturbations to the agents' utilities, as in~\cite{BV1984,BR1991,T2003,S2006,AL2010,AL2011,AL2012,BHKN2010,BHHR2011}. In dynamic panel models, one observes equilibria across several markets sharing common underlying parameters, and in particular~\cite{HM2010} considers a setting in which a unique, fixed equilibrium is played within each market. We adopt a similar approach here, and assume that we have access to several markets or locations that play perturbed versions of the same game, and that a single (mixed) equilibrium is played in each market.

In the branch of the econometrics literature that aims to infer parameters in game theoretic model, it is often the case that one requires that the game be small or that the utilities of the players can be written as simple functions of a restricted number of parameters. For example, 2-player entry games with entry payoffs parametrized as linear functions of a small number of variables, as seen in~\cite{T2003} and subsequent work, are among the most-studied in the literature. One drawback of this literature is that when the space of parameters is high-dimensional or when multiple equilibria exist, identification of the true parameters of the game often becomes impossible, since the observations do not correspond to a unique consistent explanatory game. A number of recent papers~\cite{ABJ2004,CT2009,BMM2011,NST2015} consider instead the problem of constructing {\em{regions}} of parameters that contain the true value of the parameters they aim to recover from equilibrium observations of games. For example, Nekipelov, Syrgkanis, and Tardos~\cite{NST2015} study a dynamic sponsored search auction game, and provide a characterization of the \textit{rationalizable set}, consisting of the set of private parameters that are consistent with the observations, under the relaxed assumption that players need not follow equilibrium play, but rather use some form of no-regret learning. Relatedly, Andrews, Berry, and Jia~\cite{ABJ2004} and Ciliberto and Tamer~\cite{CT2009} compute confidence regions for the value of the true parameter, but their regions are not ``sharp,'' in the sense that they may contain parameter values that are not consistent with some of the implications of their models.

Perhaps closest to the present work, Beresteanu, Molchanov, and Molinari~\cite{BMM2011} combine random set theory and convex optimization to give a representation of the \textit{sharp identification region} as the set of values for which the solution to a convex optimization program with a random objective function is almost surely (in the payoff shifters) equal to $0$.\footnote{Our notion of the {\em{consistent set}} is closely analogous to the sharp identification region of~\cite{BMM2011}. We use different terminology to highlight that they are derived under somewhat different settings.} Hence, verifying membership of a parameter value to the \textit{sharp identification region} can be done efficiently in simple settings such as entry-games with linearly parametrized payoffs. This is an exciting advance, especially when considering games with few players and small action sets; however, for computational reasons, the approach becomes impractical in large games, such as 2-player games with many actions per player:
\begin{itemize}
\item The Beresteanu, Molchanov, and Molinari~\cite{BMM2011} framework can verify that a vector of parameter values belongs to the sharp identification set, but does not provide an efficient, searchable representation of the sharp identification set itself, nor an algorithm to efficiently find a point in the set. 
\item One can verify that a parameter vector belongs to the sharp identification set by checking that a particular condition holds for almost all possible realizations of the payoff shifters. Beresteanu, Molchanov, and Molinari~\cite{BMM2011} further show that one can cluster payoff shifters into groups such that all perturbed games in the same group have the same set of Nash equilibria; one then must check the condition only once per group. In particular, in their entry-game example, the number of such groups is small, and thus this is a computationally tractable task. However, in more complex games, or for more general equilibrium concepts, the number of such groups can become intractably large in the number of actions available to each player.
\item Finally, the BMM framework~\cite{BMM2011} relies on being able to compute {\em{all}} equilibria of each of the perturbed games, which may be impractical. (In general, in the worst case, it is computationally hard to find \emph{any} Nash equilibrium in time polynomial in the number of actions of each player, even for  2-player games~\cite{CDT2007}.)

\end{itemize}

The goal of the present paper is similar to the goal of~\cite{BMM2011}, in the sense that we wish to sharply understand the set of games that are consistent with a set of observations (for us, correlated equilibria of perturbed games). We also use the setting of their simulations as the jumping off point for our own experimental section. However, our approach differs from that of~\cite{BMM2011} in two main ways. First, we make weaker assumptions on the information on the unobserved payoff shifters available to the observer; our approach to modeling the perturbations is inspired by the concept of uncertainty sets in robust optimization (see~\cite{BGN2009,BBC2011,BB2012}). 
Second, our framework provides a computationally efficient characterization of the consistent set, both in theory and in practice, on games of large size, and also gives efficient and practical algorithms to find points in the consistent set, compute its diameter, and test whether it contains games with certain properties. 

A handful of papers in the computer science literature have looked at slightly different but related questions, arising when observing equilibria of games whose payoffs are unknown. In particular, Bhaskar et al.~\cite{BLSS2014} and Rogers et al.~\cite{RRUW2015} study a network routing setting in which equilibrium behavior can be observed but edge costs are unknown, and study the query complexity of devising a variant of the game to induce desired target flows as equilibria. Barman et al.~\cite{BBEW2013} adopt a model in which the observer observes what joint strategies are played when restricting the actions of the players in a complete information game with no perturbations, and show that data with certain properties can be rationalized by games with low complexity. Perhaps closest to our work,~\cite{KS2015} gives a convex and computationally efficient characterization of the set of consistent games when the true game is succinct and its structure is known; however, their setting does not incorporate payoff shifters, instead assuming they observe different equilibria from different succinct games with different structures, but that all these games share the same parameter value.

\section{Model and setting}\label{sec: model}

\subsection{Players' behavior}
Consider a finite two-player game $G$; we will refer to it as the {\em true} or {\em underlying} game. Let $\cA_1,\cA_2$ be the finite sets of actions available to players $1$ and $2$, respectively, and let $m_1=| \cA_1 |$ and $m_2=| \cA_2 |$ be the number of actions available to them. For every $(i,j) \in \cA_1 \times \cA_2$, we denote by $G_p(i,j)$ the payoff of player $p$ when player $1$ chooses action $i$ and player $2$ chooses action $j$. $G_p \in \RR^{m_1 \times m_2}$ is the vector representation of the utility of player $p$, and we often abuse notation and write $G=(G_1,G_2)$. The {\em strategies} available to player $p$ are simply the distributions over $\cA_p$. 
A {\em strategy profile} is a pair of strategies (distributions over actions), one for each player. A {\em joint strategy profile} is a distribution over pairs of actions (one for each player); it is not required to be a product distribution.
We refer to strategies as {\em pure} when they place their entire probability mass on a single action, and {\em mixed} otherwise.

We consider $l$ perturbed versions of the game $G$, indexed by $k \in [l]$ so that the $k^{th}$ perturbed game is denoted $G^k $; one can for instance imagine each $G^k$ as a version of the game $G$ played in a different location or market $k$. The same notation as for $G$ applies to the $G^k$'s.

Throughout the paper, we assume that for each $k$, the players' strategies are given by a correlated equilibrium of the complete information game $G^k$. In the presence of several such equilibria, no assumption is made on the selection rule the players use to pick which equilibrium to play (though we assume they both play according to the same equilibrium). Correlated equilibria are defined as follows:

\begin{definition}
A probability distribution $e$ is a \textit{correlated equilibrium} of game $G=(G_1,G_2)$ if and only if
\begin{align*}
\sum\limits_{j=1}^{m_2} G_1(i,j) e_{ij} \geq \sum\limits_{j=1}^{m_2} G_1(i',j) e_{ij} \; \forall i,i' \in \cA_1
\\\sum\limits_{i=1}^{m_1} G_2(i,j) e_{ij} \geq \sum\limits_{i=1}^{m_1} G_2(i,j') e_{ij} \; \forall j,j' \in \cA_2
\end{align*}
\end{definition}
The notion of correlated equilibrium extends the classical notion of Nash equilibrium by allowing players to act jointly; as every Nash equilibrium of a game is a correlated equilibrium of the same game, many of our results also have implications for Nash equilibria.

\subsection{Observation model}
Our observer does not have access to the payoffs of the underlying game $G$ nor of the perturbed games $G^k$, for any $k$ in $[l]$. We model an observer as observing, for each perturbed game $G^k$, the entire correlated equilibrium distribution $e^k \in \RR^{m_1 \times m_2}$, where $e^k(i,j)$ denotes the joint probability in the $k$th perturbed game of player $1$ playing action $i \in \cA_1$ while player $2$ plays action $j \in \cA_2$.\footnote{The reader may interpret this assumption as describing a situation in which each perturbed game is played repeatedly over time, with the same (possibly mixed) equilibrium played each time, allowing the observer to infer the probability distribution over actions that is followed by the players. Technically, using samples to estimate an empirical distribution over actions would yield a $\varepsilon$-approximate equilibrium for some $\varepsilon$ that depends on the sample size. Our framework can trivially be extended to deal with such approximate correlated equilibria; for simplicity, we omit further discussion of this issue.} Note that as $e^k$ represents a probability distribution, we require $e^k(i,j) \geq 0 \; \forall (i,j)$ and $\sum\limits_{i,j} e^k_{ij}=1$.
In this paper, we consider two variants of the model of observations we just described:
\begin{itemize}
\item In the \textit{partial payoff information} setting, the observer has access to equilibrium observations $e^1,\ldots,e^l$, and additionally to the expected payoff of equilibrium $e^k$ on perturbed games $G^k$, for each player $p$ and for all $k \in [l]$; we denote said payoff $v_p^k$ and note that $v_p^k=e^{k \; \prime} G_p^k$. 
\item In the \textit{payoff shifter information} setting, at each step $k$, a \textit{payoff shifter} $\beta^k=(\beta_1^k,\beta_2^k) \in \cR^{m_1 \times m_2 } \times \cR^{m_1 \times m_2 }$ is added to game $G=(G_1,G_2)$, and the perturbed games $G^k$ result from the further addition of small perturbations to the $G+\beta^k$'s. The observer knows $\beta^1,\ldots,\beta^l$ and observes $e^1,\ldots,e^l$ of perturbed games $G^1,\ldots,G^l$. This  setting represents a situation in which changes in the behavior of agents are observed as a function of changes in observable economic parameters (taxes, etc.). 
\end{itemize}

While the payoff shifter information setting is the model of perturbations that is commonly used in the literature, the partial payoff information setting has not been used in previous work, to the best of our knowledge. We introduce it to model the following types of situation: 

\begin{example}
Two firms are competing for customers in Los Angeles, and an observer follows what actions the two L.A. firms take over the course of each quarter. The observer also learns the quarterly revenue of each firm.
\end{example}

\begin{example}
Several agents are playing a routing game, and the observer sees not just the routes the players choose, but also the amount of time players spend in traffic.
\end{example}

\subsection{Observer's knowledge about the perturbations}
Our paper aims to characterize the games that explain equilibrium observations under the partial payoff and  payoff shifter information settings when the perturbations are known to be ``small'' and the perturbed games are thus ``close'' to the underlying game. The next few definitions formalize our notion of closeness, and Assumption~\ref{assmpt:perturbations} formalizes the information the observer has about the perturbations added to the underlying game $G$.

\begin{definition}
A game $G$ is $\delta$\textit{-close} to games $G^1,\ldots,G^l$ with respect to metric $d$ for $\delta > 0$ if and only if $d(G^1,\ldots,G^l|G) \leq \delta$.
\end{definition}

We think of $d$ as distances and therefore convex functions of the perturbations $G-G^k$ for all $k$. For the above definitions to make sense in the context of this paper, we need a metric whose value on a set of games $G,G^1,\ldots,G^l$ is small when $G,G^1,\ldots,G^l$ are close in terms of payoffs. 
We consider the following metrics:

\begin{definition}
The {\em{sum-of-squares distance}} between games $G$ and $G^1,\ldots,G^l$ is given by 
$$d_2(G^1,\ldots,G^l|G)=\sum\limits_{k=1}^l (G_1-G^k_1)'(G_1-G^k_1)+\sum\limits_{k=1}^l (G_2-G^k_2)'(G_2-G^k_2).$$
The {\em{maximum distance}} between games $G$ and $G^1,\ldots,G^l$ is defined as
$$d_{\infty}(G^1,\ldots,G^l|G)=\max\limits_{p \in \{1,2\}, \; k \in [l]} \Vert G_p-G^k_p \Vert_{\infty},$$
where $\Vert . \Vert_{\infty}$ denotes the usual infinity norm.
\end{definition}

Both distances are useful, in different situations. The sum-of-squares distance is small when the variance of the perturbations added to $G$ is known to be small, but allows for worst-case perturbations to be large. An example is when the $G^k$'s are randomly sampled from a distribution with mean $G$, unbounded support, and small covariance matrix, in which case some of the perturbations may deviate significantly from the mean but with low probability, while the average squared perturbation remains small. If the distribution of perturbations is i.i.d Gaussian, the sum-of-squares norm replicates the log-likelihood of the estimations and follows a Chi-square distribution. The maximum distance, in contrast, is small when it is known that all perturbations are small and bounded; one example is when the perturbations are uniform in a small interval $[-\delta,\delta]$. 



Throughout the paper, we make the following assumption on the information about the perturbations that is available to the observer:
\begin{assumption}\label{assmpt:perturbations}
Let $G$ be the underlying game and $G^1,\ldots, G^l$ be the perturbed games that generated observations $e^1,\ldots,e^l$.
\begin{itemize}
\item In the partial payoff information settings, the observer knows that $G$ is $\delta$\textit{-close} to games $G^1,\ldots,G^l$ with respect to some metric $d$ and magnitude $\delta \geq 0$.
\item In the payoff shifter information setting with observed shifters $\beta^1,\ldots,\beta^l$, the observer knows that $G$ is $\delta$\textit{-close} to the unshifted games $(G^1-\beta^1),\ldots,(G^l-\beta^l)$ with respect to some metric $d$ and magnitude $\delta \geq 0$.
\end{itemize}
\end{assumption}

Assumption~\ref{assmpt:perturbations} defines a convex set the observer knows the perturbations must belong to, much like the uncertainty sets given in~\cite{BB2012,BGN2009,BBC2011}. We note that the $d_2$ and $d_{\infty}$ distances we focus on define respectively an ellipsoidal and a polyhedral uncertainty set (as seen in~\cite{BBC2011}).

\begin{remark}
While we make Assumption~\ref{assmpt:perturbations} for convenience and simplicity of exposition, our framework is able to handle more general sets of perturbations. In particular, the results of Section~\ref{sec: optim_framework} can easily be extended to any convex set of perturbations that has an efficient, easy-to-optimize-over representation. This includes classes of sets defined by a tractable number of linear or convex quadratic constraints, which in turn encompasses many of the uncertainty sets considered in~\cite{BB2012}, such as the central limit theorem or correlation information sets, and most of the typical sets presented therein.
\end{remark}


\subsection{Consistent games}In this paper, as in~\cite{BMM2011}, we adopt an observation-driven view that describes the class of games that are consistent with the observed behavior. Given a set of observations, we define the \textit{set of consistent games} as follows:

\begin{definition}[$\delta$-consistency]\label{def:consistency}
We say a game $\tilde{G}$ is {\em{$\delta$-consistent}} with the observations when there exists a set of games $(\tilde{G},\tilde{G}^1,\ldots,\tilde{G}^l)$ such that for all $k$, $e^k$ is an equilibrium of $\tilde{G}^k$, and:
\begin{itemize}
\item If in the partial payoff information model of observations, $\sum_p \tilde{G}_p^{k \; \prime} e^k = v_p^k$ for all players $p$ and $d(\tilde{G}^1,\ldots,\tilde{G}^l|\tilde{G}) \leq \delta$. 
\item If in the payoff shifter information model, $d(\tilde{G}^1-\beta^1,\ldots,\tilde{G}^l-\beta^l|\tilde{G}) \leq \delta$.
\end{itemize}
The set of all $\delta$-consistent games with respect to metric $d$ is denoted $S_{d}(\delta)$.
\end{definition}

Given the specifications of our model, it is often the case that, given a set of observations with no additional assumption on the distribution of perturbations nor on a the rule used to select among multiple equilibria, it is not possible to recover an approximation to a unique game that generated these observations (no matter what recovery framework is used). That is, the diameter of the consistent set can sometimes be too large for approximate point identification to be possible, which is highlighted in the following example:
\begin{example}\label{ex: trivial_explanation}
Take any set of  
observations $e^1,\ldots,e^l$ under the no payoff information observation model, and let $\hat{G}$ be the all-constant game, i.e., $\hat{G}_1(i,j)=\hat{G}_2(i,j)=c$ for some $c \in \mathbb{R}$ and for all $(i,j) \in \cA_1 \times \cA_2$. Let $\hat{G}^1=\ldots=\hat{G}^l=\hat{G}$. Then for all $k \in [l]$, $e^k$ is an equilibrium of $\hat{G}^k$, and $d_2(G^1,\ldots,G^l|G)=d_{\infty}(G^1,_ldots,G^l|G)=0$. That is, $\hat{G}$ is a trivial game, and it is consistent with all possible observations. Even when $e^1,\ldots,e^l$ are generated by a non-trivial $G$, without any additional observations, an observer cannot determine whether $G$ or $\hat{G}$ is the underlying game.  In fact, both games are consistent with all implications of our model. We note that this issue arises regardless of how inferences will be drawn about the observations, so long as the approach does not discard consistent games.
\end{example}
It may thus be of interest to an observer to compute the diameter of the consistent set, either to determine whether point identification is possible, or simply to understand how tightly the observations constrain the space of consistent games. We define it as follows:
\begin{definition}\label{def: diameter}
The diameter $D(S_d(\delta))$ of consistent set $S_d(\delta)$ is given by 
$$D(S_d(\delta))=\sup \{ \Vert \hat{G} - \tilde{G} \Vert_{+\infty} \text{ s.t. } \tilde{G},\hat{G} \in S_d(\delta) \}$$
\end{definition}
When the diameter is small, then every game in the consistent set is close to the true underlying game, and approximate point identification is achievable. When the diameter $D(S_d(\delta))$ grows large, point identification is impossible independently of what framework is used for recovery, as there exist two games that are $D(S_d(\delta))$-far apart in terms of payoff, yet either could have generated all observations.

\section{A convex optimization framework}\label{sec: optim_framework}
In this section, we show how techniques from convex optimization can be used to 
recover the perturbation-minimizing explanation for a set of observations, determine the extent to which observations are consistent with certain assumptions on the underlying game, and determine whether a set of observations tightly constrains the set of games that could explain it well. The results in this section are not tied to a specific observation model.

\subsection{Efficient characterization of the set of consistent games}\label{sec:consistent_set}

We will show that for every $\delta$, and $d \in \{d_2,d_{+\infty}\}$, the set of consistent games $S_d(\delta)$ has an efficient, convex representation.
\begin{claim}
If in the ``partial payoff information'' model of observations:
\[
S_{d}(\delta)= \left \{ 
 G \text{ s.t } 
\begin{array}{ll@{}ll} & \exists (G^1,\ldots,G^l) \text{ with } d(G^1,\ldots,G^l|G) \leq \delta \text{ s.t. }\\
		& \sum\limits_{j=1}^{m_2} G^k_1(i,j) e^k_{ij} \geq \sum\limits_{j=1}^{m_2} G^k_1(i',j) e^k_{ij} \; \forall i,i' \in \cA_1, \forall k \in [l],\\
                 	& \sum\limits_{i=1}^{m_1} G^k_2(i,j) e^k_{ij} \geq \sum\limits_{i=1}^{m_1} G^k_2(i,j') e^k_{ij} \; \forall j,j' \in \cA_2, \forall k \in [l] \\
		& \sum_p \tilde{G}_p^{k \; \prime} e^k = v_p^k
\end{array}
\right \}
\]
If in the ``payoff shifter information'' model:
\[
S_{d}(\delta)= \left \{ 
 G \text{ s.t } 
\begin{array}{ll@{}ll} & \exists (G^1,\ldots,G^l) \text{ with } d(\tilde{G}^1-\beta^1,\ldots,\tilde{G}^l-\beta^l|\tilde{G}) \leq \delta \text{ s.t. }\\
		& \sum\limits_{j=1}^{m_2} G^k_1(i,j) e^k_{ij} \geq \sum\limits_{j=1}^{m_2} G^k_1(i',j) e^k_{ij} \; \forall i,i' \in \cA_1, \forall k \in [l],\\
                 	& \sum\limits_{i=1}^{m_1} G^k_2(i,j) e^k_{ij} \geq \sum\limits_{i=1}^{m_1} G^k_2(i,j') e^k_{ij} \; \forall j,j' \in \cA_2, \forall k \in [l]	
\end{array}
\right \}
\]
\end{claim}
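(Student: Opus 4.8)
The plan is to prove the claimed set equality by directly unfolding the definitions, and then to observe that the resulting representation is convex and efficiently searchable, which is the substantive content behind the phrase ``efficient characterization.'' I would treat the two observation models in parallel, since the only difference between them is which closeness constraint (and, in the partial payoff case, which extra linear payoff-matching constraint) is imposed.

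First I would establish the set equality by showing both inclusions, each of which is immediate. For the forward inclusion, suppose $G \in S_d(\delta)$. By Definition~\ref{def:consistency} there exist games $(G^1,\ldots,G^l)$ such that $e^k$ is a correlated equilibrium of $G^k$ for every $k\in[l]$, and the appropriate closeness (and payoff) conditions hold. Unfolding the definition of correlated equilibrium applied to each $G^k$ with the observed distribution $e^k$ yields exactly the two families of incentive inequalities displayed in the claim, indexed over $i,i'\in\cA_1$, over $j,j'\in\cA_2$, and over $k\in[l]$; the closeness condition is exactly $d(G^1,\ldots,G^l\mid G)\le\delta$ (respectively $d(G^1-\beta^1,\ldots,G^l-\beta^l\mid G)\le\delta$). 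Hence $G$ lies in the right-hand set. The reverse inclusion is the same argument read backwards: any witnessing tuple $(G^1,\ldots,G^l)$ satisfying the displayed constraints makes each $e^k$ a correlated equilibrium of $G^k$ and certifies $\delta$-closeness, so $G$ is $\delta$-consistent.

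The part worth spelling out is convexity and efficiency of this representation. The key observation is that $e^k$ is fixed observed data, so each incentive inequality is \emph{linear} in the unknown payoff matrices $(G,G^1,\ldots,G^l)$ rather than bilinear; the payoff-matching constraints $\sum_p G_p^{k\,\prime}e^k=v_p^k$ are likewise linear. The closeness constraint is convex for both metrics of interest: $d_2$ is a convex quadratic in the perturbations $G-G^k$, and $d_\infty$ is a maximum of absolute values and hence piecewise linear and convex. Therefore the feasible region in the joint variable $(G,G^1,\ldots,G^l)$ is a convex set defined by $O(l(m_1^2+m_2^2))$ linear constraints together with one convex (quadratic or polyhedral) constraint, and $S_d(\delta)$ is its projection onto the $G$-coordinates. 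Since projections of convex sets are convex, $S_d(\delta)$ is convex; optimizing any linear objective over it reduces to a convex quadratic or linear program in the joint variables whose size is polynomial in $m_1,m_2,l$, giving the efficiency.

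I do not expect a genuine obstacle, as the set equality is a tautological restatement of Definition~\ref{def:consistency} once correlated equilibrium is expanded; the only points requiring care are (i) verifying that with $e^k$ held fixed the equilibrium constraints are linear, which is what rules out the intractability that would arise if the $e^k$ were also unknown, and (ii) confirming that both $d_2$ and $d_\infty$ are convex in the perturbations so that the projection defining $S_d(\delta)$ inherits convexity. These are exactly the features I would foreground, since they are what distinguishes this characterization from the membership-only, equilibrium-enumeration approach discussed for~\cite{BMM2011}.
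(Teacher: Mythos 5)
Your proposal is correct and follows essentially the same route as the paper: the set equality is obtained by unfolding Definition~\ref{def:consistency} together with the definition of correlated equilibrium (the paper's proof is literally ``follows from the definition of $\delta$-consistency''), and your convexity/efficiency discussion mirrors the remarks the paper places immediately after the claim (linearity of the equilibrium constraints in the unknown payoffs for fixed $e^k$, convexity of the $d_2$ and $d_\infty$ constraints, and a number of constraints quadratic in $m_1,m_2$).
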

\begin{proof}
Follows from the definion of $\delta$-consistency (Definition~\ref{def:consistency})
\end{proof}
We remark that as in~\cite{BMM2011}, our sets are \textit{sharp}: any game that explains the observations belongs to this set, and any game that belongs to this set is consistent with our assumptions and observations. Indeed, if $G$ is in the consistent set, there must exist perturbations of valid magnitude (given by the corresponding $G^1,\ldots,G^l$) and an equilibrium $e^k$ of each perturbed game that together would lead to our observations, by the definition of the consistent set. 

These consistent sets have efficient convex representations, for two reasons. First, all constraints are always linear except those of the form 
\\$d(G_p^1,\ldots,G_p^l|G) \leq \delta$. When $d=d_2$, $d(G_p^1,\ldots,G_p^l|G) \leq \delta$ is a simple convex quadratic constraint, while when $d=d_{\infty}$, $d(G_p^1,\ldots,G_p^l|G) \leq \delta$ is equivalent to the following collection of linear constraints: 
\begin{align*}
-\delta \leq G_p^k-G_p \; \forall p \in \{1,2\}, \forall k \in [l]\\
 G_p^k-G_p \leq \delta \; \forall p \in \{1,2\}, \forall k \in [l].
\end{align*}
Second, the number of constraints describing each set is quadratic in the number of player actions $m_1$ and $m_2$.

As mentioned in Section~\ref{sec: model}, in all observation models, the assumption that $d(G_p^1,\ldots,G_p^l|G) \leq \delta$ can easily be replaced by an assumption of the perturbations $G^k-G$ being in any tractable convex set. In particular, many of the sets considered in~\cite{BB2012} fit this requirement, and they describe robust information that an observer without distributional knowledge of the perturbations could realistically have on said perturbations: for example, an observer could know that the sum or average of the perturbations satisfies certain lower- and upper-bounds.

\subsection{Recovering the perturbation-minimizing consistent game}\label{sec:basic_optimization}

Here, we consider the problem of recovering a game that best explains a given set of observations from perturbed games, according to the desired distance metric $d$. One reason to do so is that it enables an observer to test whether there exists any game in $S_d(\delta)$ that is consistent with specific properties and to give a measure of how much of $S_d(\delta)$ has said properties---see Section~\ref{sec: linear_properties}. Or, it could be that the observer is simply interested in recovering the ``best'' game according to any simple convex metric of interest. For any metric $d$ and any observation model, this can be done simply by solving:
\begin{align}
\begin{array}{ll@{}ll}\label{optim_unseparated}
\min\limits_{G,\delta}  & \displaystyle \delta &\\
\text{s.t.}	& G \in S_d(\delta)	
\end{array}
\end{align}
It is easy to see that this program returns the game $G$ and the minimum value of $\delta$ such that $d(G^1,\ldots,G^l|G) \leq  \delta$ (resp. $d(G^1-S^1,\ldots,G^l-S^l|G) \leq  \delta$) in the partial payoff information setting (resp. the payoff shifter information setting) where the $G^k$'s satisfy all equilibrium constraints, hence Program~\eqref{optim_unseparated} returns the perturbation-minimizing $G$ that is consistent with all observations. When $d=d_{\infty}$, this is a linear program; when $d=d_2$, this is a second-order cone program, using the same reasoning as in Section~\ref{sec:consistent_set} (this holds even with $\delta$ as a variable). Both types of programs can be solved efficiently, as seen in~\cite{boyd2004convex}.

\subsection{Can observations be explained by linear properties?}\label{sec: linear_properties}
This convex optimization-based approach can further be used to determine whether there exists a game that is compatible with the observations and that also has certain additional properties, as long as the properties of interest can be expressed as a tractable number of linear equalities and inequalities. One can then solve program~\eqref{optim_unseparated} with said linear equalities and inequalities as additional constraints (the program remains a SOCP or LP with a tractable number of constraints), then check whether the optimal value is greater than or less than $\delta$. If the optimal value is greater than $\delta$, then there exists no game with those properties that belongs to the $\delta$-consistent set; if the optimal value is smaller than $\delta$, then the recovered game displays the additional properties and  belongs to the $\delta$-consistent set. In what follows, we present a few examples of interesting properties that fit this framework.

\subsubsection{Zero-sum games}

A zero-sum game is a game in which for each pure strategy $(i,j)$, the sum of the payoff of player $1$ and the payoff of player $2$ for $(i,j)$ is always $0$. One can restrict the set of games we look for to be zero-sum games, at the cost of separability of Program~\eqref{optim_unseparated}, by adding constraints $G_1(i,j)=-G_2(i,j) \; \forall (i,j) \in \cA_1 \times \cA_2$.

\subsubsection{Exact potential games}

A 2-player game $G$ is an exact potential game if and only if it admits an exact potential function, i.e., a function $\Phi$ that satisfies: 
\begin{align}
\Phi(i,j)-\Phi(i',j)=G_1(i,j)-G_1(i',j) \; \forall i,i' \in \cA_1, \forall j \in \cA_2 \label{eq: potential_constraint_1}
\\ \Phi(i,j)-\Phi(i,j')=G_2(i,j)-G_2(i,j') \; \forall i \in \cA_1, \forall j,j' \in \cA_2 \label{eq: potential_constraint_2}
\end{align}
In order to restrict the set of games we are searching over to the set of potential games, one can introduce $m_1 m_2$ variables $\Phi(i,j)$ and constraints~\eqref{eq: potential_constraint_1},~\eqref{eq: potential_constraint_2} in Program~\eqref{optim_unseparated}. 
%

\subsubsection{Games generated through linear parameter fitting}
It is common in the literature to recover a game with the help of a parametrized function whose parameters are calibrated using the observations. In many applications, linear functions of some parameters are considered---entry games are one example. Our framework allows one to determine whether there exist parameters for such a linear function that provide good explanation for the observations. When such parameters exist, one can use the mathematical program to find a set of parameters that describe a game which is consistent with the observations.
Take two functions $f_1(\theta)$ and $f_2(\theta)$ that are linear in the vector of parameters $\theta$ and output a vector in $\mathbb{R}^{m_1 \times m_2}$. It suffices to add the optimization variable $\theta$ and the linear constraints $G_1=f_1(\theta)$ and $G_2=f_2(\theta)$ to Program~\eqref{optim_unseparated} to restrict the set of games we look for to games linearly parametrized by $f_1,f_2$.

\subsection{Computing the diameter of the consistent set}\label{sec:checking_identifiability}

In this section, we provide an algorithm (Algorithm~\ref{alg: alg_identification}) for computing the diameter of $S_d(\delta)$, for any given value of $\delta$. Because the diameter is a property of the consistent set and not of the framework used to recover an element from said set, this tells an observer whether approximate point identification is possible \textit{independently} of what framework is used for recovery. In particular, when the diameter is small, our framework approximately recovers the true underlying game (see Section~\ref{sec:basic_optimization}). When the diameter is large, \textit{no} framework can achieve approximate point identification of a true, underlying game.

\begin{algorithm}[!h]
\SetAlgoNoLine
\KwIn{Observations $e^1,\ldots,e^l$, magnitude of perturbations $\delta$, metric $d$}
\KwOut{Real number $\mathcal{A}(\delta)$ (may be infinite)}
\For{$(i,j) \in \cA_1 \times \cA_2$, player $p \in {1,2}$}
{\begin{equation*}
\begin{array}{ll@{}ll}
P_{\delta,p}(i,j)= & \sup\limits_{\tilde{G},\hat{G},\gamma} \; & \displaystyle \gamma \\
& \text{s.t.} & \tilde{G} \in S_d(\delta)\\
		&&\hat{G} \in S_d(\delta)\\
		&& \tilde{G}_1(i,j)-\hat{G}_1(i,j) \geq \gamma\\
\end{array}
\end{equation*}
}

$\mathcal{A}(\delta)=\max\limits_{(i,j)\in \cA_1 \times \cA_2} \max\limits_{p \in \{1,2\}} P_{\delta,p}(i,j)$

\caption{Computing the diameter of the consistent set}
\label{alg: alg_identification}
\end{algorithm}
Algorithm~\ref{alg: alg_identification} is computationally efficient for the considered metrics $d_2$ and $d_{\infty}$: it solves $2 m_1 m_2$ linear programs for $d_\infty$, and $2 m_1 m_2$ second-order cone programs (SOCP) for $d=d_2$ with a tractable number of constraints. The algorithm computes the diameter of the consistent set:
\begin{lemma}\label{properties_algo}
The output $\mathcal{A}(\delta)$ of Algorithm~\ref{alg: alg_identification} run with input $\delta$ satisfies $\mathcal{A}(\delta) = D(S_d(\delta))$.
\end{lemma}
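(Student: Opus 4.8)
The plan is to prove the two inequalities $\mathcal{A}(\delta) \le D(S_d(\delta))$ and $\mathcal{A}(\delta) \ge D(S_d(\delta))$ separately, relating the coordinate-wise optimization subproblems $P_{\delta,p}(i,j)$ to the supremum in Definition~\ref{def: diameter}. The key observation is that the diameter uses the $\ell_\infty$-type norm $\Vert \cdot \Vert_{+\infty}$ over all payoff entries and both players, so that $\Vert \hat{G}-\tilde{G}\Vert_{+\infty} = \max_{(i,j),p} |\hat{G}_p(i,j)-\tilde{G}_p(i,j)|$. The algorithm essentially takes a $\max$ over $(i,j)$ and $p$ of the largest achievable signed gap in a single coordinate, and I want to argue this reconstructs exactly the largest achievable $\ell_\infty$ distance between two consistent games.

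First I would establish $\mathcal{A}(\delta) \le D(S_d(\delta))$. Fix any $(i,j)$ and $p$; for the feasible pair $(\tilde{G},\hat{G})$ achieving (or approaching) the supremum in $P_{\delta,p}(i,j)$, both games lie in $S_d(\delta)$, and the constraint forces $\tilde{G}_p(i,j)-\hat{G}_p(i,j) \ge \gamma$. Hence $\gamma \le |\tilde{G}_p(i,j)-\hat{G}_p(i,j)| \le \Vert \tilde{G}-\hat{G}\Vert_{+\infty} \le D(S_d(\delta))$, so $P_{\delta,p}(i,j) \le D(S_d(\delta))$ for every coordinate and player. Taking the maximum over $(i,j)$ and $p$ gives $\mathcal{A}(\delta) \le D(S_d(\delta))$. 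I would be careful to phrase this as a supremum argument so that it holds even if the sup is not attained.

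For the reverse inequality $\mathcal{A}(\delta) \ge D(S_d(\delta))$, I would take any pair $\tilde{G},\hat{G} \in S_d(\delta)$ and let $(i^*,j^*,p^*)$ be a coordinate achieving the $\ell_\infty$ maximum, i.e., $\Vert \tilde{G}-\hat{G}\Vert_{+\infty} = |\tilde{G}_{p^*}(i^*,j^*)-\hat{G}_{p^*}(i^*,j^*)|$. After possibly swapping the roles of $\tilde{G}$ and $\hat{G}$ to remove the absolute value (which is legitimate since the subproblem over this coordinate is solved for both orderings as $(i,j)$ and $p$ range, or equivalently since swapping keeps both games in $S_d(\delta)$), this pair is feasible for $P_{\delta,p^*}(i^*,j^*)$ with $\gamma = \tilde{G}_{p^*}(i^*,j^*)-\hat{G}_{p^*}(i^*,j^*) = \Vert \tilde{G}-\hat{G}\Vert_{+\infty}$. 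Therefore $P_{\delta,p^*}(i^*,j^*) \ge \Vert \tilde{G}-\hat{G}\Vert_{+\infty}$, so $\mathcal{A}(\delta) \ge \Vert \tilde{G}-\hat{G}\Vert_{+\infty}$. Taking the supremum over all feasible pairs yields $\mathcal{A}(\delta) \ge D(S_d(\delta))$.

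The main subtlety, rather than a deep obstacle, is the handling of absolute values and signs: the subproblem only enforces a one-sided gap $\tilde{G}_1(i,j)-\hat{G}_1(i,j)\ge \gamma$ for player $1$'s coordinate, whereas the diameter is defined with an absolute value over both players. I expect the clean resolution is that, because the roles of $\tilde{G}$ and $\hat{G}$ are symmetric and interchangeable within $S_d(\delta)$, taking the maximum of $P_{\delta,p}(i,j)$ over all coordinates $(i,j)$ and both players $p$ automatically captures the correct sign and the worst coordinate; I would state this explicitly to close the argument. A secondary point worth a remark is that $\mathcal{A}(\delta)$ and $D(S_d(\delta))$ may both be $+\infty$ (e.g., when $\delta$ is large enough that the constant-game degeneracy of Example~\ref{ex: trivial_explanation} admits unbounded perturbations), and the two-sided inequality argument extends to this case without modification.
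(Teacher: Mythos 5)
Your proof is correct and follows essentially the same route as the paper's: the paper introduces an intermediate program $P_\delta$ that restates Definition~\ref{def: diameter} and then proves the same two inequalities you do, using the same observation that the symmetry of the roles of $\tilde{G}$ and $\hat{G}$ in $S_d(\delta)$ resolves the sign/absolute-value issue. Your explicit remarks about non-attained suprema and the possibly infinite case are minor refinements the paper leaves implicit.
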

\begin{proof}
See Appendix~\ref{sec: diameter_proof}.
\end{proof}


\section{Consistent games with partial payoff information: when is recovery possible?}\label{sec: payoff_info}

This section considers the \textit{partial payoff information} variant of the observation model described in Section~\ref{sec: model}. We ask the following question: when is it possible to approximate the underlying game, in the presence of partial payoff information? We answer this question by giving bounds on the diameter of the consistent set $S_d(\delta)$ as a function of $\delta$ and the observations $e^1,...,e^l$, for both metrics $d_2$ and $d_\infty$.

Recall that in this setting, for an equilibrium $e^k$ observed from perturbed game $G^k$, the observer learns not only $e^k$, but also the expected payoff  $v_p^k$ of player $p$ in said equilibrium strategy on game $G^k$. Similar to the previous sections, we are interested in computing  a game $\hat{G}$ that is close to some perturbed games $\hat{G}^1,...,\hat{G}^l$ that (respectively) have equilibria $e^1,...,e^l$ with payoffs $v^1,...,v^l$. For simplicity of presentation, we recall that the optimization program that the observer solves is separable and note that he can thus solve the following convex optimization problem for player 1, and a similar optimization problem for player 2:
\begin{equation}\label{primal_program_payoff}
\begin{array}{ll@{}ll}
&P(\epsilon)&=\min\limits_{G_1^k,G_1} \; \displaystyle d(G_1^1,...,G_1^k|G_1) &\\
& \text{s.t.}  & \sum\limits_{j=1}^d G_1^k(i,j) e^k_{ij} \geq \sum\limits_{j=1}^d G_1^k(i',j) e^k_{ij} \; \forall i,i' \in \cA_1, \forall j \in \cA_2, \forall k \in [l]\\
		&&e^{k \; \prime} G_1^k = v_1^k \;  \forall k \in [l]
\end{array}
\end{equation}


We take $l \geq m_1 m_2$ and make the following assumption for the remainder of this subsection,
unless otherwise specified:
\begin{assumption}
There exists a subset $E \subset \{e^1,...,e^l\}$ of size $m_1 m_2$ such that the vectors in $E$ are linearly independent.
\end{assumption}
We abuse notation and denote by $E$ the $m_1 m_2 \times m_1 m_2$ matrix in which row $i$ is given by the $i^{th}$ element of set $E$, for all $i \in [m_1 m_2]$; also, we write $d(G^1,...,G^l|G)=\sum\limits_p d(G_p^1,...,G_p^l|G_p)$, i.e., $d(G_p^1,...,G_p^l|G_p)$ is the part of $d(G^1,...,G^l|G)$ that corresponds to player $p$. For every $p \in \mathbb{N} \cup \{ +\infty \}$, let $\Vert . \Vert_p$ be the p-norm. We can define the corresponding induced matrix norm $\Vert . \Vert_p$ that satisfies $\Vert M \Vert_p= \sup\limits_{x \neq 0} \frac{\Vert M x \Vert_p}{\Vert x \Vert_p}$  for any matrix $M \in \mathbb{R}^{m_1 m_2 \times m_1 m_2}$.

Lemmas~\ref{lem: d2-recovery} and~\ref{lem: dinf-recovery} highlight that if one has $m_1 m_2$ linearly independent observations (among the $l$ equilibrium observations) such that the induced matrix of observations $E$ is well-conditioned, and the perturbed games are obtained from the underlying game through small perturbations, any optimal solution of Program~(\ref{primal_program_payoff}) necessarily recovers a game whose payoffs are close to the payoffs of the underlying game. The statements are given for both metrics introduced in Section~\ref{sec: model}.

\begin{lemma}\label{lem: d2-recovery}
Let $G$ be the underlying game, and $G^1, ...,G^{l}$ be the games generating observations $e^1,..,e^{l}$, where $l=m_1 m_2$. Suppose that for player $p$, $d_2(G_p^1,...,G_p^l|G_p)   \leq \delta$. Let ($\hat{G}_p,\hat{G}_p^1,...,\hat{G}_p^l)$ be an optimal solution of~Program (\ref{primal_program_payoff}) for player $p$ with distance function $d_2$. Then  
$$\Vert G_p-\hat{G}_p \Vert_2 \leq  \sqrt{2 \Vert E^{-1} \Vert_2 \cdot \delta}.$$
\end{lemma}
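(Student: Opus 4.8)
The plan is to exploit the separability already built into Program~\eqref{primal_program_payoff} (so I work one player $p$ at a time) and to reduce the whole claim to inverting the observation matrix $E$. The first thing I would record is a feasibility/optimality observation: the true tuple $(G_p, G_p^1,\dots,G_p^l)$ is itself feasible for Program~\eqref{primal_program_payoff}, since each $e^k$ is a correlated equilibrium of $G^k$ (so the player-$p$ equilibrium inequalities hold for $G_p^k$) and $v_p^k = e^{k\,\prime}G_p^k$ by the definition of the observed payoff. Hence optimality of $(\hat{G}_p,\hat{G}_p^1,\dots,\hat{G}_p^l)$ gives $\sum_k \Vert \hat{G}_p-\hat{G}_p^k\Vert_2^2 = d_2(\hat{G}_p^1,\dots,\hat{G}_p^l\mid \hat{G}_p)\le d_2(G_p^1,\dots,G_p^l\mid G_p)\le \delta$, so the recovered perturbations obey the \emph{same} $\delta$-budget as the true ones.

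Next I would use the payoff constraints to couple $G_p$ and $\hat{G}_p$. Both tuples satisfy $e^{k\,\prime}G_p^k = v_p^k = e^{k\,\prime}\hat{G}_p^k$, hence $e^{k\,\prime}(G_p^k-\hat{G}_p^k)=0$ for every $k$. Writing $G_p^k = G_p+(G_p^k-G_p)$ and $\hat{G}_p^k=\hat{G}_p+(\hat{G}_p^k-\hat{G}_p)$ and rearranging yields, for each $k$, the scalar identity $e^{k\,\prime}(G_p-\hat{G}_p) = e^{k\,\prime}\big[(\hat{G}_p^k-\hat{G}_p)-(G_p^k-G_p)\big]$. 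Stacking these $m_1 m_2$ equations (recall $l=m_1 m_2$) produces a linear system $E\,(G_p-\hat{G}_p) = w$, where the $k$-th coordinate of $w$ is the right-hand side above. Since $E$ is invertible, $G_p-\hat{G}_p = E^{-1}w$, and the conditioning of $E$ enters through $\Vert G_p-\hat{G}_p\Vert_2 \le \Vert E^{-1}\Vert_2\,\Vert w\Vert_2$.

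It then remains to bound $\Vert w\Vert_2$ by the perturbation budget. Each $e^k$ is a probability distribution, so $\Vert e^k\Vert_2\le\Vert e^k\Vert_1=1$; by Cauchy--Schwarz each coordinate of $w$ is at most $\Vert (\hat{G}_p^k-\hat{G}_p)-(G_p^k-G_p)\Vert_2$ in absolute value, whence $\Vert w\Vert_2^2 \le \sum_k \Vert (\hat{G}_p^k-\hat{G}_p)-(G_p^k-G_p)\Vert_2^2$. Expanding and invoking the two budget bounds $\sum_k\Vert G_p^k-G_p\Vert_2^2\le\delta$ and $\sum_k\Vert\hat{G}_p^k-\hat{G}_p\Vert_2^2\le\delta$ controls this by a constant multiple of $\delta$, which combined with the inversion step produces a bound of the advertised shape $\sqrt{\delta}\cdot(\text{spectral factor of }E)$.

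The main obstacle is landing \emph{exactly} on $\sqrt{2\Vert E^{-1}\Vert_2\,\delta}$ rather than a looser estimate such as $2\Vert E^{-1}\Vert_2\sqrt{\delta}$ that the crude chain above delivers, since the inversion naively contributes $\Vert E^{-1}\Vert_2$ to the first power while the stated bound carries only $\Vert E^{-1}\Vert_2^{1/2}$. Reconciling these requires being careful about two things: (i) the precise constant obtained when $\Vert (\hat{G}_p^k-\hat{G}_p)-(G_p^k-G_p)\Vert_2^2$ is split across the two separate $\delta$-budgets, and (ii) the exact power of $\Vert E^{-1}\Vert_2$ that should survive the passage from $\Vert w\Vert_2$ back to $\Vert G_p-\hat{G}_p\Vert_2$. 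Here the structural facts that $\Vert e^k\Vert_2\le 1$ and, since the rows of $E$ sum to one, $E\mathbf{1}=\mathbf{1}$ (forcing $\Vert E^{-1}\Vert_2\ge 1$) pin down the relevant regime and are, I expect, exactly what must be leveraged to nail the stated constant; getting this quantitative interplay right is the delicate part of the argument.
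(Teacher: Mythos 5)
Your argument is, step for step, the paper's own proof: feasibility of the true tuple $(G_p,G_p^1,\ldots,G_p^l)$ plus optimality to get $\sum_k\Vert\hat{G}_p-\hat{G}_p^k\Vert_2^2\le\delta$, the payoff constraints to get $e^{k\,\prime}(G_p^k-\hat{G}_p^k)=0$, the stacking $E\,\Delta G=(e^{k\,\prime}x_k)_k$ with $\Delta G=G_p-\hat{G}_p$ and $x_k=(G_p-G_p^k)+(\hat{G}_p^k-\hat{G}_p)$, the bound $\Vert E\,\Delta G\Vert_2^2\le\sum_k\Vert x_k\Vert_2^2$ via $\Vert e^k\Vert_2\le\Vert e^k\Vert_1=1$, and finally inversion of $E$. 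The ``obstacle'' you flag at the end is real, but it is not yours to resolve: the paper's proof runs exactly this chain, asserts $\Vert E\,\Delta G\Vert_2^2\le 2\delta$ (itself dropping a factor of $2$, since $\Vert a+b\Vert_2^2\le 2\Vert a\Vert_2^2+2\Vert b\Vert_2^2$ only yields $4\delta$ from the two separate budgets), and then claims the bound $\sqrt{2\Vert E^{-1}\Vert_2\cdot\delta}$ ``immediately follows.'' It does not: what follows is $\Vert\Delta G\Vert_2\le\Vert E^{-1}\Vert_2\,\Vert E\,\Delta G\Vert_2\le\Vert E^{-1}\Vert_2\sqrt{2\delta}$, and since the rows of $E$ are probability vectors, $E\mathbf{1}=\mathbf{1}$ forces $\Vert E^{-1}\Vert_2\ge 1$, so the stated bound with $\Vert E^{-1}\Vert_2^{1/2}$ is strictly stronger than anything this argument delivers. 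In short, your proof is as complete as the paper's; the exponent on $\Vert E^{-1}\Vert_2$ in the lemma statement appears to be an error, and the honest conclusion of this line of reasoning is $\Vert G_p-\hat{G}_p\Vert_2\le 2\Vert E^{-1}\Vert_2\sqrt{\delta}$ (or $\Vert E^{-1}\Vert_2\sqrt{2\delta}$ if one grants the paper's $2\delta$ intermediate bound). You need not search for a cleverer split of the budgets to rescue the square root on $\Vert E^{-1}\Vert_2$.
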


\begin{proof}
For simplicity of notation, we drop the $p$ indices. We first remark that $(G,G^1,...,G^l)$ is feasible for Program~\eqref{primal_program_payoff}; as $(\hat{G},\hat{G}^1,...,\hat{G}^l)$ is optimal, it is necessarily the case that 
$$\sum\limits_{k=1}^{l} \Vert \hat{G}-\hat{G}^k \Vert_2^2   \leq \sum\limits_{k=1}^{l}  \Vert G-G^k \Vert_2^2  \leq \delta.$$ 
Let us write $\Delta G=G-\hat{G}$. We know that for all $k$, $e^{k \; \prime} G^k=e^{k \; \prime} \hat{G}^k=v^k$, and thus $e^{k \; \prime} (G^k -\hat{G}^k)=0$. We can write 
\begin{align*}
E \Delta G&=(e_1'(G-\hat{G}) \; ... \; e_l'(G-\hat{G}))' \\
& =(e_1'(G-G^1+G^1-\hat{G}^1+\hat{G^1}-\hat{G}) \; ... \; e_l'(G-G^l+G^l-\hat{G}^l+\hat{G^l}-\hat{G}))'
\\& =(e_1'(G-G^1+\hat{G^1}-\hat{G}) \; ... \; e_l'(G-G^l+\hat{G^l}-\hat{G}))'.
\end{align*}
Let $x_k=G-G^k+\hat{G^k}-\hat{G}$. We then have $\Vert E \Delta G \Vert_2^2 =\sum\limits_{k=1}^l x_k' e_k e_k' x_k$, as $e_k e_k'$ is a symmetric, positive semi-definite, stochastic matrix, all its eigenvalues are between $0$ and $1$ and 
$$ \Vert E \Delta G \Vert_2^2 \leq \sum\limits_{k=1}^l x_k' x_k= \sum\limits_{k=1}^l \Vert x_k \Vert_2^2 \leq 2 \delta.$$
It immediately follows that $\Vert \Delta G \Vert_2 \leq \sqrt{2 \Vert E^{-1} \Vert_2 \cdot \delta}.$
\end{proof}


\begin{lemma}\label{lem: dinf-recovery}
Let $G$ be the underlying game, and $G^1, ...,G^{l}$ be the games generating observations $e^1,..,e^{l}$, where $l=m_1 m_2$. Suppose that for player $p$, $d_{\infty}(G_p^1,...,G_p^l|G_p)   \leq \delta$. Let ($\hat{G}_p,\hat{G}_p^1,...,\hat{G}_p^l)$ be an optimal solution of~Program~(\ref{primal_program_payoff}) for player $p$ with distance function $d_{\infty}$. Then  
$$\Vert G_p-\hat{G}_p \Vert_{\infty} \leq 2 \Vert E^{-1} \Vert_{\infty}  \cdot \delta.$$
\end{lemma}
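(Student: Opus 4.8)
The plan is to mirror the proof of Lemma~\ref{lem: d2-recovery} almost verbatim, replacing the Euclidean norm bookkeeping with $\ell_\infty$-norm bookkeeping. As before I drop the player index $p$. The starting observation is identical: the true tuple $(G,G^1,\ldots,G^l)$ is feasible for Program~\eqref{primal_program_payoff} and $(\hat G,\hat G^1,\ldots,\hat G^l)$ is optimal, so the optimal objective value cannot exceed the value at the truth. With $d=d_\infty$ the objective is $\max_k \Vert G-G^k\Vert_\infty$, so optimality gives $\max_k \Vert \hat G-\hat G^k\Vert_\infty \leq \max_k \Vert G-G^k\Vert_\infty \leq \delta$.

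Next I would reproduce the key algebraic identity from the $d_2$ proof. Writing $\Delta G = G-\hat G$ and using that $e^{k\prime}G^k = e^{k\prime}\hat G^k = v^k$ for every $k$ (the payoff constraints), the difference $e^{k\prime}(G^k-\hat G^k)$ vanishes, so each row of $E\,\Delta G$ equals $e_k'x_k$ with $x_k := G-G^k+\hat G^k-\hat G$, exactly as in Lemma~\ref{lem: d2-recovery}. The only change is how I bound the resulting vector. Here I use that $\Vert E\Delta G\Vert_\infty = \max_k |e_k' x_k|$, and since each $e_k$ is a probability vector (nonnegative entries summing to $1$), Hölder's inequality gives $|e_k' x_k| \leq \Vert e_k\Vert_1 \Vert x_k\Vert_\infty = \Vert x_k\Vert_\infty$. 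Then the triangle inequality on $x_k = (G-G^k)+(\hat G^k-\hat G)$ yields $\Vert x_k\Vert_\infty \leq \Vert G-G^k\Vert_\infty + \Vert \hat G^k-\hat G\Vert_\infty \leq \delta + \delta = 2\delta$, using both feasibility of the truth and optimality of the solution established above. Hence $\Vert E\Delta G\Vert_\infty \leq 2\delta$.

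To finish, I would invert $E$: since $\Delta G = E^{-1}(E\Delta G)$, the definition of the induced $\infty$-norm gives $\Vert \Delta G\Vert_\infty \leq \Vert E^{-1}\Vert_\infty \Vert E\Delta G\Vert_\infty \leq 2\Vert E^{-1}\Vert_\infty\cdot\delta$, which is exactly the claimed bound. Note this is cleaner than the $d_2$ case: because the $\infty$-distance is a max (not a sum of squares) and because $\ell_1/\ell_\infty$ duality handles the stochastic vector $e_k$ without taking square roots, the bound comes out linear in $\delta$ rather than as $\sqrt{\cdot}$.

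The step I expect to require the most care is the duality bound $|e_k'x_k|\le\Vert x_k\Vert_\infty$: in the $d_2$ proof the analogous step exploited that $e_k e_k'$ is positive semidefinite with eigenvalues in $[0,1]$, whereas here I must instead invoke that $e_k$ is a probability distribution so that $\Vert e_k\Vert_1=1$, making Hölder tight. The remaining subtlety is purely notational — making sure the $\max_k$ structure of $d_\infty$ is carried consistently through the feasibility/optimality comparison, since unlike the summed $d_2$ objective each individual perturbation norm, not just their aggregate, is controlled by $\delta$. Everything else is a direct transcription of the earlier argument.
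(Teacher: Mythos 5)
Your proposal is correct and follows essentially the same route as the paper's proof in Appendix~\ref{infinite_norm_proof}: the same feasibility/optimality comparison, the same decomposition $E\Delta G = (e_1'x_1,\ldots,e_l'x_l)'$ with $x_k = G-G^k+\hat{G}^k-\hat{G}$, the same triangle-inequality bound of $2\delta$, and the same inversion via the induced $\infty$-norm. If anything, your H\"older justification $|e_k'x_k|\le \Vert e_k\Vert_1\Vert x_k\Vert_\infty = \Vert x_k\Vert_\infty$ is stated more precisely than the paper's remark that ``$e^k$ has only elements between $0$ and $1$,'' since the sum-to-one property is what is actually needed.
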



\begin{proof}
See Appendix~\ref{infinite_norm_proof}.
\end{proof}

When $E$ is far from being singular, as long as the perturbations are small, we can accurately recover the payoff matrix of each player. This has a simple interpretation: the further $E$ is from being singular, the more diverse the observations are. More diverse observations means more information for the observer, and allows for more accurate recovery. On the other hand, if the matrix $E$ is close to being singular, it means that one sees the same or similar observations over and over again, and does not gain much information about the underlying game---there are more payoffs to recover than different observations, and the system is underdetermined. 

An extreme example arises when we take $E$ to be the identity matrix, in which case we observe every single pure strategy of the game and an approximation of the payoff of each of these strategies, allowing us to approximately reconstruct the game. It is also the case that there are examples in which $\Vert E^{-1} \Vert_{\infty}$ is large and there exist two games that are far from one another, yet both explain the observations, making our bound essentially tight:

\begin{example}
Consider the square matrix $E \in \mathbb{R}^{4 \times 4}$ with probability $0.25+\epsilon$ on the diagonal and $\frac{0.75-\epsilon}{3}$ off the diagonal, i.e., we observe four equilibria, each placing probability slightly higher than $0.25$ on a different action profile; the first equilibrium has a higher probability on action profile (1,1), the second on (1,2), the third on (2,1) and the last one on (2,2). Suppose the vector of observed payoffs is $v=(\delta, -\delta, \delta, -\delta)$, where $v(i)$ is the payoff for the $i^{th}$ equilibrium. Note that there exists a constant $C$ such that for all $\epsilon > 0$ small enough, $\Vert E^{-1} \Vert_{+\infty} \leq \frac{C}{\epsilon}$. 

In the rest of the example, we fix the payoff matrix of player $2$ for all considered games to be all-zero so that it is consistent with every equilibrium observation, and describe a game through the payoff matrix of player $1$. Let $G$ be the all-zero game, $G^1=G^3$ be the game with payoff $\frac{\delta}{0.5+2\epsilon/3}$ on actions (1,1) and (1,2) and 0 everywhere else, and $G^2=G^4$ be the game with payoff $-\frac{\delta}{0.5+2\epsilon/3}$ on actions (2,1) and (2,2) and 0 everywhere else. The $G^i$'s are consistent with the payoff observations as the payoffs are constant across rows on the same column, making no deviation profitable, and the payoff of equilibria $e^1$ and $e^3$ on $G^1$ and $G^3$ is indeed $\delta$, and $-\delta$ for $e^2$ and $e^4$ on $G^2$ and $G^4$. We have 
$$d_{\infty}(G^1,G^2,G^3,G^4|G)=\frac{\delta}{0.5+2\epsilon/3} \leq 2\delta$$
and 
$$\lim_{\epsilon \to 0} d_{\infty}(G^1,G^2,G^3,G^4|G) = 2 \delta.$$

Now, take $\hat{G}$ to be the game that has payoff $\delta/\epsilon$ for action profiles (1,1) and (1,2), and $-\delta/\epsilon$ for (2,1) and (2,2). Take $\hat{G}^1=\hat{G}^3$ to be the game with payoffs $\frac{\delta}{\epsilon}$ in the first column, and $-\frac{\delta}{\epsilon} \frac{3-2\epsilon}{3-4\epsilon}$ in the second column; similarly, take $\hat{G}^2=\hat{G}^4$ to be the game with payoffs  $\frac{\delta}{\epsilon} \frac{3-2\epsilon}{3-4\epsilon}$ in the first column and $-\frac{\delta}{\epsilon}$ in the second column. The observations are equilibria of the $\hat{G}^i$'s and yield payoff $\delta$. Now, note that for $\epsilon < 3/4$,
$$d_{\infty}(G^1,G^2,G^3,G^4|G)=\frac{\delta}{\epsilon} \left | 1- \frac{3-2\epsilon}{3-4\epsilon} \right | = \frac{2}{3-4\epsilon} \delta.$$

Therefore, both $G$ and $\hat{G}$ are good explanations of the equilibrium observations, in the sense that for $\epsilon \leq 1/4$, $G$ is $\delta$-close to $G^1,...,G^l$ and $\hat{G}$ is $\delta$-close to $\hat{G}^1,...,\hat{G}^l$ that have $e^1,...,e^l$ as equilibria, respectively. However, 
$$\Vert G - \hat{G} \Vert_{\infty}=\frac{\delta}{\epsilon}-\frac{\delta}{0.5+2\epsilon/3} \geq \delta \left(\frac{1}{\epsilon}-2\right),$$
which immediately implies
$$\Vert G - \hat{G} \Vert_{\infty} = \Omega_{\epsilon \to 0}\left(\frac{\delta}{\epsilon}\right)= \Omega_{\epsilon \to 0}\left(\Vert E^{-1} \Vert_{\infty} \delta\right).$$
\end{example}


\begin{remark}
In the case of {\em sparse} games, in which some action profiles are never profitable to the players, and are therefore never played, one can reduce the number of linearly independent, well-conditioned observations needed for accurate recovery. Under the assumption that the action profiles that are never played with positive probability have payoffs strictly worse than the lowest payoff of any action profile played with non-zero probability, one can solve the optimization problem on the restricted set of action profiles that are observed in at least one equilibrium, and set the payoffs of the remaining action profiles to be lower than the lowest payoff of the recovered subgame, without affecting the equilibrium structure of the game. While the recovered game may not be the unique good explanation of the observations when looking at the full payoff matrix, it is unique with respect to the subgame of non-trivial actions when one has access to sufficiently many linearly independent, well-conditioned equilibrium observations.
\end{remark}

\showsection{
\section{Finding consistent games without additional information}\label{sec: no_payoff_info}

This section focuses on the \textit{no payoff information} variant of the observation model given in Section~\ref{sec: model}. Recall that in this setting, the observer only observes what equilibrium $e^k$ is played for each perturbed game $G^k$. In this section, we note that in the absence of additional information, the consistent region contains a continuum of trivial and nearly trivial games that may not be of interest to an observer. Hence, we provide a framework that allows the observer to avoid recovering trivial games by controlling the degree of ``degeneracy'' (i.e., closeness to a trivial game) of the games he considers. Further, we characterize how much the size of the consistent set shrinks as a function of the minimum level of degeneracy of the games the observer is interested in.

\subsection{Finding non-degenerate games}

In this section, we separate the programs solved for players $1$ and $2$ and focus on the optimization problem that recovers the payoffs of player $1$ (by symmetry, all results can be applied to the optimization program that recovers the payoffs player $2$); we drop the player indices for notational simplicity. Since no payoff information is given, throughout this section, we assume w.l.o.g that the games are normalized to have all payoffs between $0$ and $1$. As mentioned in Example~\ref{ex: trivial_explanation}, the all-constant game $G=G^1=...=G^l$  gives an optimal solution to our optimization problem, as such a game is compatible with all equilibrium observations and has an objective value $d(G^1,...,G^l|G)=0$. It is therefore the case that solving our optimization problem might output a degenerate game, so in this section, we provide a framework that allows us to control the degree of degeneracy of the game we recover and to avoid trivial, all-constant games. To do so, we require some of the equilibria of the games to be ``strict,'' in the sense that 
\begin{align*}
\sum\limits_{j=1}^d G(i,j) x_{ij} \geq \sum\limits_{j=1}^d G(i',j) x_{ij}+\varepsilon_{ii'} \; \forall i,i'
\end{align*}
with $\varepsilon_{ii'} \geq 0$ and with the condition that at least one of the $\varepsilon_{ii'}$ is non-zero. All-constant games do not have strict equilibria, thus this avoids such games. Note that such a technique only affect the payoffs of pure strategies that are played with positive probability, and does not accord any importance to strategies that are never played. Let us now consider the new problem:
\begin{equation*}
\begin{array}{ll@{}ll}
\min\limits_{G^k,G}  & \displaystyle \displaystyle d(G^1,...,G^l|G) &\\
\text{s.t.}& e^k\text{ is a ``strict'' equilibrium of }G^k , \; \forall k\\
  		&0 \leq G(i,j) \leq 1, \; \forall (i,j)
\end{array}
\end{equation*}
which can be rewritten as

\begin{equation*}
\begin{array}{ll@{}ll}
\min\limits_{G^k,G}  & \displaystyle d(G^1,...,G^l|G) &\\
\text{s.t.}& \sum\limits_{j=1}^d G^k(i,j) e^k_{ij} = \sum\limits_{j=1}^d G^k(i',j) e^k_{ij}+\varepsilon^k_{ii'} \; \forall (i,i'), \forall k\\
  		&0 \leq G(i,j) \leq 1, \; \forall (i,j)
\end{array}
\end{equation*}

We introduce a positive parameter $\varepsilon$ that controls the level of non-degeneracy of the game and let the optimization program decide how to split $\varepsilon$ among the $\varepsilon^k_{ii'}$'s in a way that minimizes the objective. The optimization program can now be written as
\begin{equation*}
\begin{array}{ll@{}ll}
P(\varepsilon)= & \min\limits_{G^k,G} \; & \displaystyle  d(G^1,...,G^l|G)  &\\
& \text{s.t.}  &\sum\limits_{j=1}^d G^k(i,j) e^k_{ij} = \sum\limits_{j=1}^d G^k(i',j) e^k_{ij}+\varepsilon^k_{ii'} \; \forall (i,i'), \forall k\\
		&&\sum\limits_{k=1}^l \sum\limits_{i,i'}  \varepsilon^k_{ii'} = \varepsilon\\
		&&0 \leq G \leq 1\\
		&& \varepsilon^k_{ii'} \geq 0 \; \forall (i,i'), \forall k
\end{array}
\end{equation*}

For all $i,i' \in \cA_1$ such that $i \neq i'$ and $k \in [l]$, we introduce vectors $\tilde{e}^{k}_{ii'}$ whose entries are defined as follows:
\begin{align*}
\tilde{e}^{k}_{ii'}(h,j)=
\begin{cases} 
      -e^k(i,j) & \text{if $h=i$} \\
      e^k(i,j) &  \text{if $h=i'$} \\
      0 & \text{if $h \neq i,i'$} 
   \end{cases}
\end{align*}

This allows us to rewrite the optimization program under the following form:
\begin{equation}\label{primal_program}
\begin{array}{ll@{}ll}
P(\varepsilon)= & \min\limits_{G^k,G} \; & \displaystyle d(G^1,...,G^l|G) &\\
& \text{s.t.} &\sum\limits_{k,i,i'}  \tilde{e}^{k \; \prime}_{ii'}G^k = - \varepsilon\\
		&& \tilde{e}^{k \; \prime}_{ii'}G^k \leq 0 \; \forall (i,i'), \forall k\\
		&&0 \leq G \leq 1
\end{array}
\end{equation}

This optimization problem is, depending on the chosen metric, either a linear or quadratic optimization program with a tractable number of constraints, and can therefore be efficiently solved.

\subsection{A duality framework}

In this section, we give a duality framework under distance $d_2$ that offers insight into the solutions to the optimization program. Throughout the section, we let $D(\varepsilon)$ be the dual of Program~\eqref{primal_program}.

\subsubsection{Sufficient conditions for strong duality}

\begin{claim}
If there exist $G^1,...,G^l$ such that
$$ \tilde{e}^{k \; \prime}_{ii'}G^k < 0 \; \forall (i,i') \in cA_1, \forall k \in [l]  \text{ s.t. } \tilde{e}^k_{ii'} \neq 0,$$
then strong duality holds and $P(\varepsilon)=D(\varepsilon)$.
\end{claim}

\begin{proof}
Slater's condition holds iff there exists a solution $G,G^1,...,G^l$ such that 
\begin{align*}
&\sum\limits_{k,i,i'}  \tilde{e}^{k \; \prime}_{ii'}G^k = - \varepsilon\\
& \tilde{e}^{k \; \prime}_{ii'}G^k < 0 \; \forall (i,i'), \forall k  \text{ s.t. } \tilde{e}^k_{ii'} \neq 0.
\end{align*}
It is enough to find $G^1,...,G^l$ such that
$$\tilde{e}^{k \; \prime}_{ii'}G^k < 0 \; \forall (i,i'), \forall k \text{ s.t. } \tilde{e}^k_{ii'} \neq 0$$
as we can then renormalize the $G^k$'s such that $\sum\limits_{k=1}^l \sum\limits_{i,i'}  \varepsilon^k_{ii'} = \varepsilon$.
\end{proof}

Note that the previous sufficient condition is not necessarily tractable to check. We give a stronger sufficient condition such that for any fixed $k$, $\tilde{e}^{k \; \prime}_{ii'}G^k < 0 \; \forall (i,i')$ has a solution:

\begin{lemma}\label{duality_condition}
\label{conditions_Slater}
Let $k \in [l]$. Let $e^k(i,:)=(e^k(i,1),...,(e^k(i,m_2))$ $\forall i \in \cA_1$. If the non-null $e^k(1,:),...,e^k(m_1,:)$ are linearly independent, then the non-null $\tilde{e}^{k }_{ii'}$'s are linearly independent. In particular, there exists $G^k$ such that 
\begin{equation*}
\tilde{e}^{k \; \prime}_{ii'}G^k<0, \; \forall i,i' \in \cA_1.
\end{equation*}
If this holds for all $k \in [l]$, then $P(\varepsilon)=D(\varepsilon)$.
\end{lemma}

\begin{proof}
Let $\alpha(h,h')$'s be such that $\sum\limits_{h,h'} \alpha(h,h') \tilde{e}^{k}_{hh'}=0$, and so $\sum\limits_{h,h'} \alpha(h,h') \tilde{e}^{k}_{hh'}(i,j)=0$ $\forall (i,j)$. Recall that for a fixed $(i,j)$, $\tilde{e}^{k}_{h,h'(i,j)} \neq 0$ only if $h=i$ or $h'=i$, but not both at the same time. Therefore, 
$$\sum\limits_{h,h'} \alpha(h,h') \tilde{e}^{k}_{hh'}(i,j)= \sum\limits_{h' \neq i} \alpha(i,h') \tilde{e}^{k}_{i,h'}(i,j)+ \sum\limits_{h \neq i} \alpha(h,i) \tilde{e}^{k}_{h,i'}(i,j).$$
As $ \tilde{e}^{k}_{i,h'}(i,j)=-e^k(i,j)$ and $\tilde{e}^{k}_{h,i}(i,j)=e^k(h,j)$, we have for all $(i,j)$ that
$$
-e^k(i,j) \sum\limits_{h' \neq i} \alpha(i,h')+\sum\limits_{h \neq i} \alpha(h,i) e^k(h,j)= \sum\limits_{h,h'} \alpha(h,h') \tilde{e}^{k}_{hh'}(i,j)= 0.
$$
Since this holds for all values of $j$, it immediately follows that for all $i$,
$$
-e^k(i,:) \sum\limits_{h' \neq i} \alpha(i,h')+\sum\limits_{h \neq i} \alpha(h,i) e^k(h,:)=0.
$$

Take any $i,i'$ such that $e^k_{ii'} \neq 0$. Then $e^k(i,:) \neq 0$ and $e^k(i',:) \neq 0$. By the previous equation, we have
\begin{align*}
-e^k(i',:) &\sum\limits_{h' \neq i'} \alpha(i',h')+ \alpha(i,i') e^k(i,:) + \sum\limits_{h \neq i,i'} \alpha(h,i') e^k(h,:)
\\ &=-e^k(i',:) \sum\limits_{h' \neq i'} \alpha(i',h')+\sum\limits_{h \neq i'} \alpha(h,i') e^k(h,:)\\
&=0
\end{align*}
and by the linear independence assumption, we necessarily have $\alpha(i,i')=0$. Therefore, the $\tilde{e}^{k}_{ii'} \neq 0$'s are linearly independent, completing the proof.
\end{proof}

Note that in the worst case, we want $m_1 \leq m_2$, as there can be up to $m_1$  non-null $e^k(i,:)$ of size $m_2$, and by symmetry, we want $m_2 \leq m_1$ for the program that recovers the payoffs of player $2$. For the remainder of this section, we require $m_1=m_2$, which can be obtained by adding dummy actions to $\cA_p$ of player $p$ with the least available actions. When the condition does not hold in such a setting, it can be obtained through small perturbations of the equilibrium observations.


\subsubsection{Dual program}
The dual of program~\eqref{primal_program} is given by:

\begin{theorem}
The dual of optimization problem~\ref{primal_program} is given by:
\begin{equation}\label{dual_program}
\begin{array}{ll@{}ll}
D(\varepsilon)=\max\limits_{\mu^k_{ii'},\lambda_0,\lambda_1}  &-\frac{1}{4}\sum_{k=1}^l (\sum\limits_{i,i'} \mu^k_{ii'} \tilde{e}^{k}_{ii'})'  (\sum\limits_{i,i'} \mu^k_{ii'} \tilde{e}^{k}_{ii'})-\mathbbm{1}' \lambda_1 - \mu \varepsilon\\
\text{s.t.}	&\lambda_1-\lambda_0+ \sum\limits_{k,i,i'} \mu^k_{ii'} \tilde{e}^{k}_{ii'} = 0
\\		&\mu+\mu^k_{ii'} \geq 0
\\		&\lambda_0,\lambda_1 \geq 0                  	
\end{array}
\end{equation}
The KKT conditions imply that if $(G^{1*},...,G^{l*},G^*)$ is a primal optimal solution and $(\lambda_0^*,\lambda_1^*,\mu^*,\mu^{k \; *}_{ii'})$ is a dual optimal solution, then
\begin{equation}\label{KKT}
\begin{array}{ll@{}ll}
\forall k, G^{k*}=A-\frac{1}{2}  (\frac{\lambda_1^*-\lambda_0^*}{l} +\sum\limits_{i,i'} \mu^{k \; *}_{ii'} \tilde{e}^{k}_{ii'})\\
G^*=A-\frac{1}{2l} (\lambda_1^*-\lambda_0^*)
\end{array}
\end{equation}
for some matrix  $A \in \mathbb{R}^{l \times l}$

\end{theorem}

\begin{proof}
See Appendix~\ref{proof_dual_program}.
\end{proof}

This duality framework will allow us to obtain bounds on the trade-off between degeneracy and accuracy in the next subsection. 

\subsection{Trade-off between degeneracy and objective value}

\begin{definition}
We define the \textit{degeneracy threshold} $\varepsilon^*$ of a set of observations as
$$\varepsilon^*=\sup \{\varepsilon \text{ s.t. } P(\varepsilon)=0\}.$$
\end{definition}

\begin{claim}
The degeneracy threshold is given by
\begin{equation}\label{threshold_epsilon}
\begin{array}{ll@{}ll}
\varepsilon^*= & - \min\limits_{G,\varepsilon^k_{ii'}} \; & \displaystyle \sum\limits_{k,i,i'} \tilde{e}^{k \; \prime}_{ii'}G\\
& \text{s.t.} & \tilde{e}^{k \; \prime}_{ii'}G \leq 0 \; \forall (i,i'), \forall k\\
		&&0 \leq G \leq 1
\end{array}
\end{equation}
\end{claim}

The claim gives a tractable linear program to solve for the degeneracy threshold.

\begin{proof}
Remark that $\varepsilon^*$ solves
\begin{align*}
\begin{array}{ll@{}ll}
\varepsilon^*= & \max\limits_{G^k,G,\varepsilon^k_{ii'}} \; & \displaystyle \varepsilon\\
& \text{s.t.}  &\tilde{e}^{k \; \prime}_{ii'}G^k+\varepsilon^k_{ii'} = 0 \; \forall (i,i'), \forall k\\
		&&\sum\limits_{k=1}^l (G^k-G)'(G^k-G)=0\\
		&&\sum\limits_{k,i,i'}  \varepsilon^k_{ii'} = \varepsilon\\
		&& \varepsilon^k_{ii'} \geq 0 \; \forall (i,i'), \forall k\\
		&&0 \leq G \leq 1
\end{array}
\end{align*}

From the fact that $\sum\limits_{k=1}^l (G^k-G)'(G^k-G)=0$ implies $G^1=...=G^l=G$, we have

\begin{equation*}
\begin{array}{ll@{}ll}
\varepsilon^*= & \max\limits_{G,\varepsilon^k_{ii'}} \; & \displaystyle \sum\limits_{k,i,i'}  \varepsilon^k_{ii'}\\
& \text{s.t.}  &\tilde{e}^{k \; \prime}_{ii'}G+\varepsilon^k_{ii'} = 0 \; \forall (i,i'), \forall k\\
		&& \varepsilon^k_{ii'} \geq 0 \; \forall (i,i'), \forall k\\
		&&0 \leq G \leq 1
\end{array}
\end{equation*}
The result follows immediately.
\end{proof}

\begin{claim}
 $\varepsilon^*$ is finite, $\forall \varepsilon \leq \varepsilon^*$, $P(\varepsilon) = 0$, and $\forall \varepsilon > \varepsilon^*$, $P(\varepsilon) > 0$. 
\end{claim}

\begin{proof}
The proof follows immediately from Claim~\ref{lower_bound}. $P(\varepsilon^*)=0$ comes from the fact that the feasible set of Program~\eqref{threshold_epsilon} is bounded: indeed, for any point in its feasible set, $0 \leq G \leq 1$ and $\varepsilon^k_{ii'}=-\tilde{e}^{k \; \prime}_{ii'}G$, forcing the $\varepsilon^k_{ii'}$ to also be bounded. Thus, $\varepsilon^*$ is a solution of a linear program on a bounded polytope and is therefore finite, and attained at an extreme point of this polytope. 
\end{proof}

Note that if we solve optimization Program~\eqref{threshold_epsilon} and find that $\varepsilon^*$ is large, then it is possible to find a large value of $\varepsilon$ such that $P(\varepsilon)=0$, and we can therefore recover a non-degenerate game that has all of the observed equilibria, i.e., we recover a game that has equilibrium properties similar in some sense to those of the true, underlying game. For smaller values of $\varepsilon^*$, we refer to the following statement:

\begin{theorem}\label{degeneracy_accuracy_trade_off}
For every $\varepsilon_0 > \varepsilon^*$, and for all $\varepsilon \geq \varepsilon_0$, we have $f(\varepsilon) \leq P(\varepsilon) \leq g(\varepsilon)$ where $f$ and $g$ are given by
\begin{align}
 &f(\varepsilon)=P(\varepsilon_0) \frac{\varepsilon^2}{\varepsilon_0^2}\\
 &g(\varepsilon)=\Big( (\sqrt{P(\varepsilon_0)}+\frac{\sqrt{l} m}{2})\frac{\varepsilon}{\varepsilon_0}-\frac{\sqrt{l} m}{2} \Big) ^2
\end{align}
\end{theorem}

\begin{proof}
See Appendix~\ref{proof_degeneracy_accuracy_trade_off}.
\end{proof}

An observer that sets a degeneracy parameter of $\varepsilon$ restricts himself to a set of games that must have an empty intersection with $S_d(f(\varepsilon))$ and a non-empty intersection with the set $S_d(g(\varepsilon))$.

}

\section{Extensions}\label{sec: extensions}

In this section, we show extensions of our framework---first to succinct games with many players, and second to some games with infinite action spaces.

\subsection{Linear succinct games (as per~\cite{KS2015})}\label{sec: succinct}

In general, computational tractability cannot be achieved as the number of players increases. A reason for this is that in the general case, an intractable, exponential number (in the number of players) of variables need be used to represent the game and its equilibria: in a game with $n$ players and $m$ actions per player, there are $m^n$ pure action profiles, hence $m^n$ variables are needed simply to represent the payoff matrices and the equilibria of the recovered games. 

However, if the game and the observed equilibria have a compact representation, the equilibrium constraints can be written down using a tractable number of variables, and our framework provides efficient algorithms to find an element in the consistent set, compute its diameter, and test for linear properties. Kuleshov and Schrijvers~\cite{KS2015} consider linear succinct games and show that if the structure of the succinct game is known and if we observe an equilibrium such that the ``equilibrium summation property holds'' (roughly, the exact expected utility of the players can be computed efficiently), then a game is consistent with the equilibrium observations if and only if a polynomial number of tractable, linear constraints are satisfied. Such constraints can easily be incorporated into our framework. (See Property $1$ and Lemma $1$ of~\cite{KS2015} for more details.)

\subsection{Cournot competition and infinite action space}\label{sec: cournot}

In the general case, our framework cannot directly deal with games with infinite action spaces in a tractable way: to write down an equilibrium constraint, one needs a constraint for each of the infinite number of possible deviations. In this section, we show that, nevertheless, for some games with infinite action spaces, only a finite, tractable number of constraints is needed to characterize the equilibria; we show how to adapt our framework to such games. For illustration, we focus on the Cournot competition game with continuous spaces of production levels. 

Consider a Cournot competition with $n$ players selling the same good. Each player $i$ chooses a production level $q_i \geq 0$, and sells all produced goods at price $P(q_1,...,q_n)$ common to all players, and each player $i$ incurs a production cost $c_i(q_i)$ to produce $q_i$ units of the good; we write G=$(P,c)$ where $c=(c_1,...,c_n)$. We assume that $P$ is concave in each $q_i$.

We assume the observer knows the function $P$ and wants to recover the costs $c_i$ of the players, where the underlying costs change slightly with each observation. Formally, consider that we have $l$ perturbed games such that in every pertubed game $k$, the players play a Cournot competition with the same, commonly known price function $P$ but perturbed cost functions $c^k_i$ for each player $i$, known to be convex. We obtain equilibrium observations $q^1,...,q^l$, where $q^k_i$ is the equilibrium production level of player $i$ in perturbed game $k$ and $q^k=(q^k_1,...,q^k_n)$, that is, $G^k=(P,c^k)$. 
%

Suppose the following hold:
\begin{itemize}
\item The observer knows the costs belong to the space of polynomials of any chosen fixed degree $d \geq 1$; i.e., the observer parameterizes the underlying and perturbed cost functions in the following way:
\begin{equation}\label{eq: poly_cost}
c_i(q_i)=\sum\limits_{ex=1}^d a_i(ex) q_i^{ex}
\end{equation}
where the $a_i(ex)$'s are now the variables the observer want to recover.
\item $d(c^1,...,c^l|c) \leq \delta$ can be written as a tractable number of semidefinite constraints on the $a_i^k$'s and $a_i$'s (this includes, but is not limited to, the $d_2$ and $d_{\infty}$ distances).
\item $P(q)$ and $\frac{\partial P}{\partial q_i}$ can be computed efficiently for $i$, given $q$. 
\end{itemize}
Then $S_d(\delta)$ has an efficiently computable and tractable representation (as a function of $n$, $l$ and $d$) as the intersection of SDP constraints. This means in particular that optimizing a linear function over $S_d(\delta)$ can be cast as a tractable semidefinite program, for which efficient solvers are known -- see~\cite{boyd2004convex}; one such solver, that we use in the simulations of Section~\ref{sec:simulations}, is CVX~\cite{CVX}. This enables us to efficiently recover a game in the consistent set, efficiently compute its diameter, and efficiently test for linear properties (by simply adding linear and thus SDP constraints, as needed).

To obtain such a tractable characterization, we only need to note that i) the equilibrium constraints can be rewritten as a tractable number of tractable linear constraints, and ii) convexity constraints on polynomials can be classically cast as tractable SDP constraints. This is the object of Appendix~\ref{sec: cournot_eq} and~\ref{sec: cournot_convex}.

\section{Simulations for entry games and Cournot competition}\label{sec:simulations}


In this section, we run simulations for two concrete settings to illustrate the power of our approach. We first (Section~\ref{sim: entry}) illustrate how our framework performs on a simple entry game. We then (Section~\ref{sim: cournot}) show that it is able to handle much larger games.

\subsection{2-player entry game}\label{sim: entry}

We first consider an {\em entry game}, in which each of two players (think of them as companies deciding whether to open a store in a new location) has two actions available to him (enter the market; don't enter the market). Entry games are common in the literature, as seen in~\cite{ABJ2004,CT2009,BMM2011}, and, because of their simplicity, allow us to cleanly visualize the consistent region. 


Each player $p$ has two actions: $\cA_p=\{0,1\}$; $a_p=0$ if player $p$ does not enter the market, $a_p=1$ if he does. The utility of a player is given by $G_p(a_p,a_{-p})=a_p ((1-a_{-p}) \gamma_p + a_{-p} \theta_p)$ for some parameters $\gamma_p \geq 0$ and $\theta_p \leq \gamma_p$, similarly to~\cite{T2003}: if player $p$ does not enter the market, his utility is zero; if he enters the game but the other player does not, $p$ has a monopoly on the market and gets non-negative utility; finally, if both players enter the game, they compete with each other and get less utility than if they had a monopoly. 

In our simulations, we fix values for the parameters $(\gamma_p,\theta_p)$ and generate the perturbed games as follows: 
\begin{itemize}
\item In the partial payoff information setting, we add independent Gaussian noise with mean $0$ and standard deviation $\sigma$ to $G_p(a_p=1,a_{-p})$  (we vary the value of $\sigma$) to obtain the perturbed games $G^1,...,G^l$. 
\item In the payoff shifter information setting, we sample the payoff shifters $\beta^1,...,\beta^l$ such that for all $k \in [l]$, for all players $p$, $\beta^k_p(a_p=1,a_{-p})$ follows a normal distribution of mean $0$ and standard deviation $\sigma_{s}$. We then add Gaussian noise with mean $0$ and standard deviation $\sigma$ to $G_p(a_p=1,a_{-p})$ to obtain the perturbed games $G^1,...,G^l$. 
\end{itemize}
In both observation models, paralleling the setting of~\cite{BMM2011}, no observed payoff shifter nor unknown noise is added to the payoff of action $a_p=0$ for player $p$; action $a_p=0$ is always assumed to yield payoff $0$ for player $p$, independently of $a_{-p}$. In order to generate the equilibrium observations, once the perturbed games are generated, we find the set of equilibria of each of the $G^k$, and sample a point $e^k$ in said set. In the payoff information case, we also compute $v^k=e^{k \; \prime} G^k$.

In order to parallel the setting of Beresteanu et al.~\cite{BMM2011}, we assume the observer knows the form of the utility function, i.e., that $G_p(0,0)=0$ and $G_p(a_p=0,a_{-p}=1)=0$, and that he aims to recover the values of $\gamma_p$ and $\theta_p$. Thus, we add linear constraints $G_p(0,0)=0$ and $G_p(a_p=0,a_{-p}=1)=0$ in the optimization programs that we solve (see Program~\eqref{optim_unseparated}) in the payoff shifter information and partial payoff information settings. Furthermore, we assume as in~\cite{BMM2011} that the observer knows that perturbations are only added to $\gamma$ and $\theta$, and therefore we add linear constraints $G_p^k(0,0)=0$ and $G_p^k(a_p=0,a_{-p}=1)=0$ for all $k \in [l]$ to the optimization problems for player $p$ in each of the observation models. All optimization problems are solved in Matlab, using CVX (see~\cite{CVX}). 

Our model for entry-games is similar to the ones presented in~\cite{T2003} and used in simulations in~\cite{BMM2011}, so as to facilitate informal comparisons of the simulation results of both papers; in particular, the parametrization of the utility functions of the players in our simulations is inspired by~\cite{BMM2011}, and noise is generated and added in a similar fashion. However, while we attempt to parallel the simulations run by Beresteanu et al.~\cite{BMM2011}, it is important to note that this is not an apples-to-apples comparison, because of key differences in the setting. In particular, our observation models (seeing full equilibria) and the information available to the observer (no distributional assumptions) are different from those in~\cite{BMM2011}.

\showsection{
\subsection{Simulation results: payoff information setting}~\label{sec: simul_results}
To facilitate comparison with Beresteanu et al.~\cite{BMM2011}, we focus on the payoff information setting in this section.
We fix $l=500$.  In all simulations, $\gamma_p=0$ for both players $p$, as in the simulations of~\cite{BMM2011}. That is, the payoff for entry when the other player does not enter is $0$. We assume the preference shock for $p$ when he chooses the ``entry'' action does not depend on the action of the other player, meaning that the perturbation added to his payoff when he enters the game and the other player does not is the same as when both players enter the game: we have for player $1$ that $$G_1(1,0)-G_1^k(1,0)=G_1(1,1)-G_1^k(1,1),$$ and for player $2$ that $$G_2(0,1)-G_2^k(0,1)=G_2(1,1)-G_2^k(1,1),$$ for all $k \in [l]$. We consider two possible variants of the information available to the observer: one in which he knows the perturbation to a player's payoff for entry does not depend on the other player's action, and one in which he does not. We call symmetric and denote by $\Theta_{sym}$ the consistent set in the first case, and call asymmetric and denote by $\Theta_{asym}$ the consistent set in the second case. In the first case, the observer includes $G_1(1,0)-G_1^k(1,0)=G_1(1,1)-G_1^k(1,1)$ and $G_2(0,1)-G_2^k(0,1)=G_2(1,1)-G_2^k(1,1)$ in his constraints; in the second, he does not, as he does not have access to this information.

\begin{figure}
\begin{subfigure}{0.3\textwidth}
  \centering
 \includegraphics[width=0.8\linewidth]{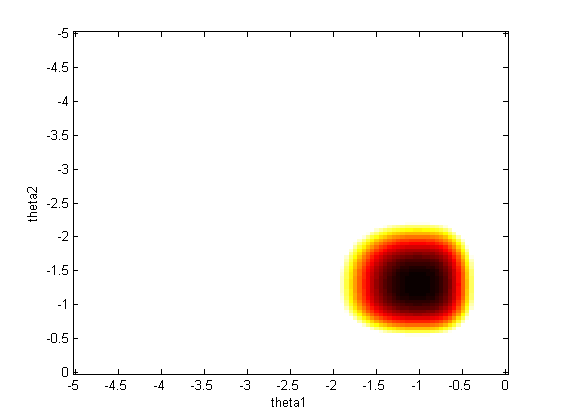}
  \caption{Consistent set $\Theta_{asym}$ with $\theta_1=-1.0$, $\theta_2=-1.3$}
\end{subfigure}
\begin{subfigure}{0.3\textwidth}
  \centering
 \includegraphics[width=0.8\linewidth]{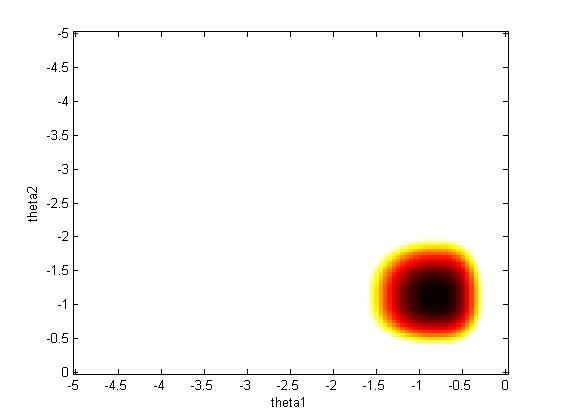}
  \caption{Consistent set $\Theta_{asym}$ with $\theta_1=-0.8$, $\theta_2=-1.1$}
\end{subfigure}
\begin{subfigure}{0.3\textwidth}
  \centering
   \includegraphics[width=0.8\linewidth]{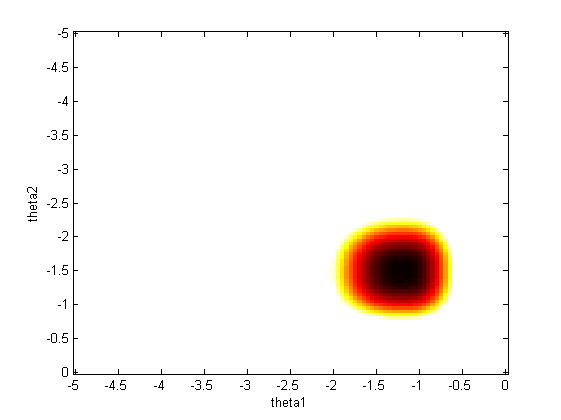}
 \caption{Consistent set $\Theta_{asym}$ with $\theta_1=-1.2$, $\theta_2=-1.5$}
\end{subfigure}

\begin{subfigure}{0.3\textwidth}
  \centering
 \includegraphics[width=0.8\linewidth]{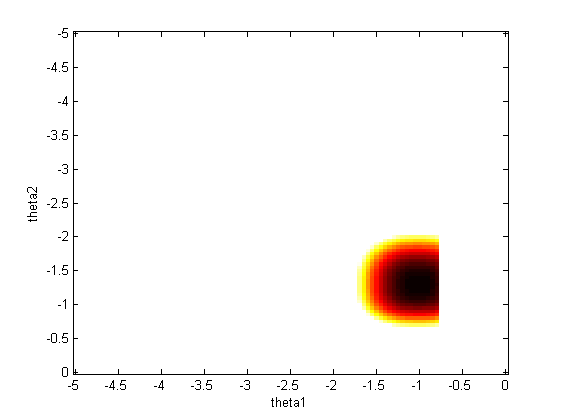}
  \caption{Consistent set $\Theta_{sym}$ with $\theta_1=-1.0$, $\theta_2=-1.3$}
\end{subfigure}
\begin{subfigure}{0.3\textwidth}
  \centering
 \includegraphics[width=0.8\linewidth]{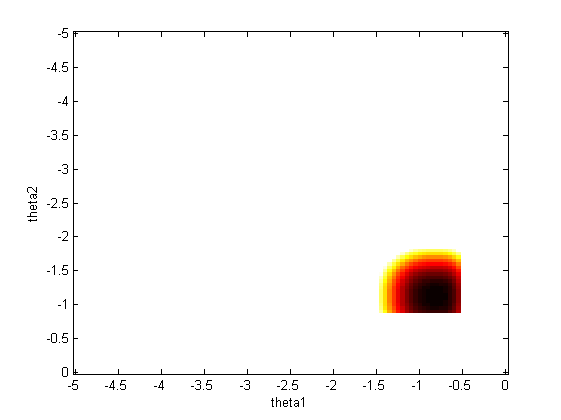}
  \caption{Consistent set $\Theta_{sym}$ with $\theta_1=-0.8$, $\theta_2=-1.1$}
\end{subfigure}
\begin{subfigure}{0.3\textwidth}
  \centering
   \includegraphics[width=0.8\linewidth]{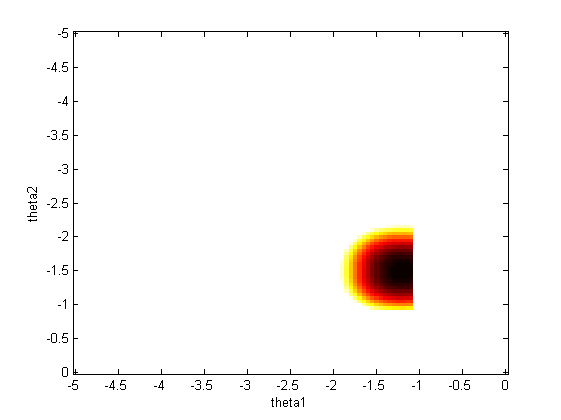}
 \caption{Consistent set $\Theta_{sym}$ with $\theta_1=-1.2$, $\theta_2=-1.5$}
\end{subfigure}

\caption{Plots of the consistent set as a function of $\theta_1$ and $\theta_2$}
\label{fig: compare_BMM}
\end{figure}

In~\cite{BMM2011}, having the standard deviation of the unobserved perturbations be of roughly the same order of magnitude as the payoffs of the game is informative, because it allows the observer to see more equilibria and obtain additional distributional information on the outcomes. Coupled with knowledge of the distribution of the unobserved shocks, this allows the observer to deduce whether a given value of $(\theta_1,\theta_2)$ could have generated the observations. In our setting, given the weaker assumptions we work with, having such a high standard deviation for the observed perturbations does not add much more information than what is typically given by the observed payoff shifters, but makes the value of $\delta$ and hence the diameter of the sharp set of consistent games large. In short, for us, the observed payoff shifters are informative, while the unobserved perturbations are not. 

Therefore, for the sake of comparison, we make the informative perturbations in our setting have the same standard deviation as the preference shocks of~\cite{BMM2011}, while keeping the unobserved perturbations relatively small: we generate our payoff shifters as Gaussian random variables with mean $0$ and standard deviation $\sigma_s=1$---i.e. with the same standard deviation as the unobserved perturbations used in Table I and Figure 3 of~\cite{BMM2011}---and our unobserved shocks as Gaussian random variables with mean $0$ and standard deviation $\sigma=0.1$.

In all plots, the colored region in the plots is the set of parameters $(\theta_1,\theta_2)$ such that the game is in the $\delta$-consistent region, for a value of $\delta$ described in the next subsubsection. The darker the region, the smaller the objective value of the best explanation for the corresponding value of $(\theta_1,\theta_2)$. The darkest part of the region represents the value of $(\theta_1,\theta_2)$ that minimizes $d_2(G^1,...,G^l|G)$.

\begin{table}[H]
\captionof{table}{Projections of the sharp identification of~\cite{BMM2011} and the consistent regions on $\theta_1$ and $\theta_2$}\label{tab: compare_BMM}
\begin{center}
\begin{tabular}{| c | c | c| c |}
\hline
\bf True values & \bf Sharp ident. region of~\cite{BMM2011} & \bf Consistent set $\Theta_{asym}$ & \bf Consistent set $\Theta_{sym}$\\ \hline 
  $\theta_1=-1.0$ & $[-2.21,-0.30]$ & $[-1.90,-0.40]$ & $[-1.70,-0.80]$\\ 
  $\theta_2=-1.3$ & $[-2.32,-1.16]$ & $[-2.20,-0.55]$ & $[-2.05,-0.75]$\\ \hline
 $\theta_1=-0.8$ & $[-1.51,-0.58]$ & $[-1.60,-0.30]$  & $[-1.50,-0.55]$ \\ 
  $\theta_2=-1.1$ & $[-1.55,-0.64]$ & $[-1.95,-0.45]$ & $[-1.85,-0.90]$\\ \hline
 $\theta_1=-1.2$ & $[-2.11,-1.05]$ & $[-2.00,-0.60]$ & $[-1.70,-0.80]$ \\ 
  $\theta_2=-1.5$ & $[-2.13,-0.90]$ & $[-2.25,-0.80]$ & $[-2.05,-0.70]$  \\ \hline
\end{tabular}

\end{center}
\end{table}

Because in the underlying process, the same i.i.d Gaussian noise with mean $0$ and variance $\sigma^2$ is added to the payoff for entry of each player when both players enter, if $G$ is the underlying game and $G^1,...,G^l$ are its perturbations, 
\begin{align*}
 \frac{1}{\sigma^2}& d_2(G^1(1,1)-\beta^1(1,1),...,G^l(1,1)-\beta^l(1,1)|G(1,1))
\\=&  \frac{1}{\sigma^2} \sum\limits_{k=1}^l (G_1(1,1)+\beta^1(1,1)-G^k_1(1,1))'(G_1(1,1)+\beta^1(1,1)-G^k_1(1,1))\\
&+\frac{1}{\sigma^2}\sum\limits_{k=1}^l (G_2(1,1)+\beta^2(1,1)-G^k_2(1,1))'(G_2(1,1)+\beta^2(1,1)-G^k_2(1,1))
\end{align*}
follows a Chi-square distribution with $2l$ degrees of freedom. Furthermore, as the noise added to the payoff of a player when he decides to enter is constant in the action of the other player, we have that
$$
d_2(G^1-\beta^1,...,G^l-\beta^l|G)=2 d_2(G^1(1,1)-\beta^1(1,1),...,G^l(1,1)-\beta^l(1,1)|G(1,1))
$$ 
i.e. $\frac{1}{2\sigma^2}d_2(G^1-\beta^1,...,G^l-\beta^l|G)$ follows a Chi-square distribution with $2l$ degrees of freedom. We always choose $\delta$ such that $P(d_2(G^1-\beta^1,...,G^l-\beta^l|G)  \leq \delta) \approx 0.99$. Thus, while the observer does not have access to the distribution of perturbations, it is extremely likely he will observe a magnitude of perturbations equal to or less than $\delta$, and we can use $\delta$ as a high-probability upper bound on the information on the perturbations accessible to the observer. We fix $\delta$ to be the same across the symmetric and asymmetric case, as the observed $\delta$ is only affected by the underlying generation process, not the beliefs of the observer.

Figure~\ref{fig: compare_BMM} plots the consistent regions $\Theta_{sym}$ and $\Theta_{asym}$ as a function of the true, underlying values $\theta_1$ and $\theta_2$ for the same values of both parameters as in Figure 3 of~\cite{BMM2011}. Table~\ref{tab: compare_BMM} compares the projections of the set on dimensions $\theta_1$ and $\theta_2$ to those of Table I of~\cite{BMM2011}. We note the peculiar shapes for $\Theta_{sym}$, and remark that when it is known the same perturbation $\epsilon$ is added to both payoff for entry for, say, player $1$, some values of $\theta_1$ are inconsistent with at least one equilibrium observation: if the observed equilibrium $e$ puts a higher probability on strategy profile $(1,0)$ rather than on strategy profile $(1,1)$, then any value of $\theta_1$ such that $G_1(1,1)=\theta_1+\beta_1(1,1)+\epsilon > G_1(1,0)=\beta_1(1,0) + \epsilon$ (i.e. $\theta_1+\beta_1(1,1) > \beta_1(1,0)$, which only depends on observed parameters) is inconsistent with $e$.
}

\subsubsection{Consistent regions for Player 1}

We fix $l=500$, $\gamma=5$, $\theta=-10$ in all simulations, and vary the values of $\sigma$ and $\sigma_s$. Because the observations are generated by adding i.i.d Gaussian noise with mean $0$ and variance $\sigma^2$ to the two payoffs for entry of each player, if $G$ is the underlying game and $G^1,...,G^l$ are its perturbations, 
$$
\frac{1}{\sigma^2} d_2(G^1,...,G^l|G) \text{ (resp. } \frac{1}{\sigma^2} d_2(G^1-\beta^1,...,G^l-\beta^l|G) \text{)}
$$
follows a Chi-square distribution with $4l$ degrees of freedom in the partial payoff information case (resp. in the payoff shifter case). We choose $\delta$ such that $P(d_2(G^1,...,G^l|G)  \leq \delta) \approx 0.99$, and suppose the observer sees said value of $\delta$. While the observer does not have access to the distribution of perturbations, it is extremely likely he will observe a magnitude of perturbations equal to or less than $\delta$, and we can use $\delta$ as a high-probability upper bound on the information on the perturbations accessible to the observer.

\begin{figure}[!h]
\begin{subfigure}{0.24\textwidth}
  \centering
  \includegraphics[width=1.1\linewidth]{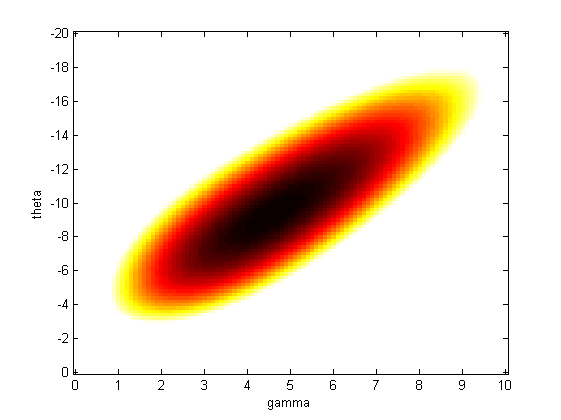}
  \caption{$\sigma=0.5$, \\~$\sigma_s=2.5$}
\end{subfigure}
\begin{subfigure}{0.24\textwidth}
  \centering
  \includegraphics[width=1.1\linewidth]{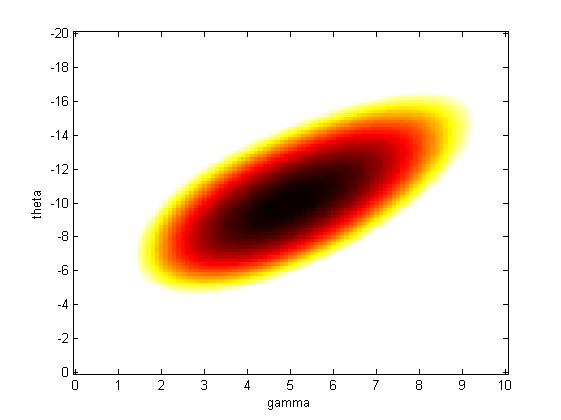}
  \caption{$\sigma=0.5$, \\~$\sigma_s=5$}
\end{subfigure}
\begin{subfigure}{0.24\textwidth}
  \centering
  \includegraphics[width=1.1\linewidth]{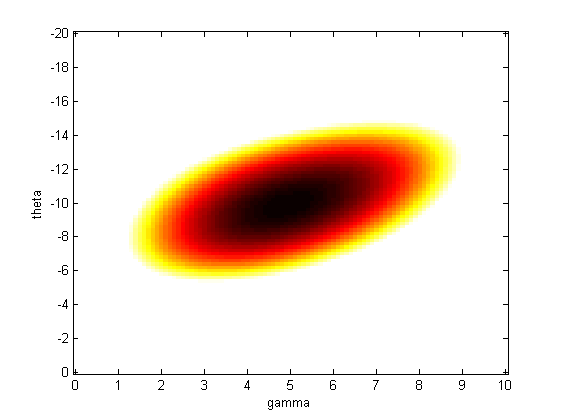}
  \caption{$\sigma=0.5$, \\~$\sigma_s=10$}
\end{subfigure}
\begin{subfigure}{0.24\textwidth}
  \centering
  \includegraphics[width=1.1\linewidth]{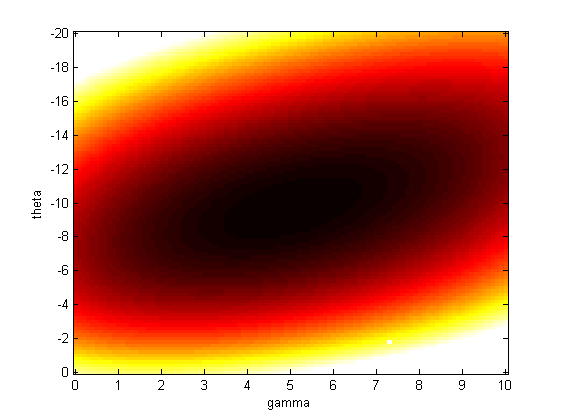}
  \caption{$\sigma=1.5$, \\~$\sigma_s=10$}
\end{subfigure}
\caption{Plots of the consistent region for different values of $\sigma,\sigma_s$ in the payoff shifter information observation model}
\label{fig: shift_results}
\end{figure}
In all plots, the colored region in the plots is the projection over the space $(\gamma_1,\theta_1)$ for player $1$ of the set of parameters $(\gamma_1,\theta_1,\gamma_2,\theta_2)$ that are in the $\delta$-consistent region. The darker the region, the smaller the objective value of the best explanation for the corresponding values of $\gamma$ and $\theta$. The black, center of the region represents the value of $(\gamma,\theta)$ that minimizes $d_2(G^1,...,G^l|G)$.

Figure~\ref{fig: shift_results} shows the evolution of the consistent region when varying $\sigma$ and $\sigma_s$ in the payoff shifter information setting. The smaller the standard deviation $\sigma$ of the unknown noise, the tighter the consistent region. On the other hand, reasonably increasing the value of $\sigma_s$ can be beneficial, at least when it comes to centering the consistent region on the true values of the parameters: this comes from the fact that when the game is sufficiently perturbed, new equilibria arise and new, informative behavior is observed, while not adding significant uncertainty to the payoffs of the game.

Figure~\ref{fig: payoff_results} shows the evolution of the consistent region when varying $\sigma$. The larger the value of $\sigma$, the larger the consistent region, and the further away its center is from the underlying, true value of the parameters. 
\begin{figure}[!h]
\begin{subfigure}{0.24\textwidth}
  \centering
  \includegraphics[width=1.1\linewidth]{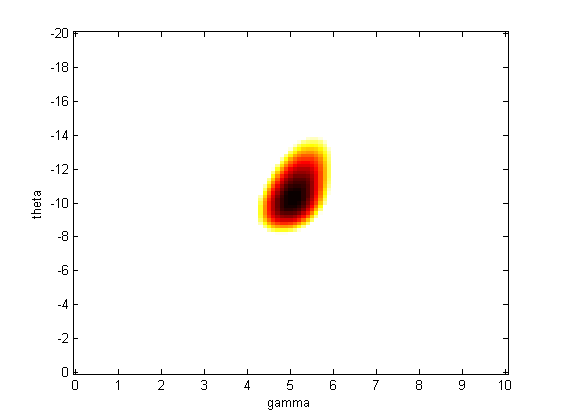}
  \caption{$\sigma=0.5$}
\end{subfigure}
\begin{subfigure}{0.24\textwidth}
  \centering
  \includegraphics[width=1.1\linewidth]{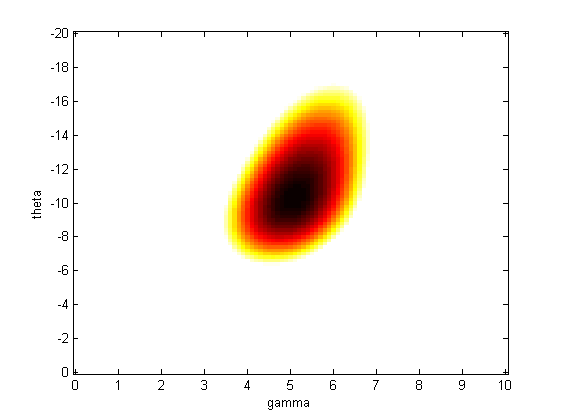}
  \caption{$\sigma=1.0$}
\end{subfigure}
\begin{subfigure}{0.24\textwidth}
  \centering
  \includegraphics[width=1.1\linewidth]{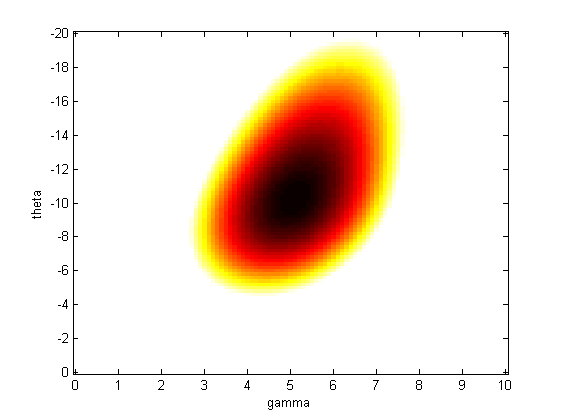}
  \caption{$\sigma=1.5$}
\end{subfigure}
\begin{subfigure}{0.24\textwidth}
  \centering
  \includegraphics[width=1.1\linewidth]{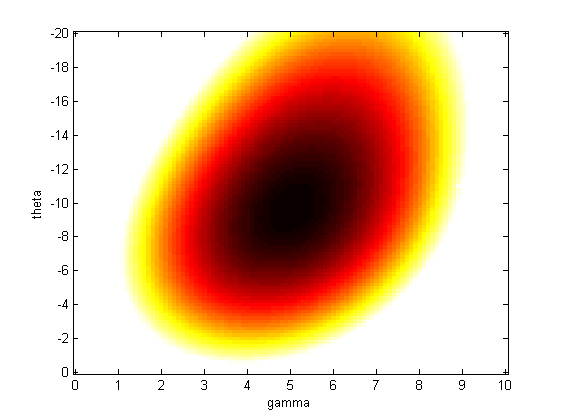}
  \caption{$\sigma=2.5$}
\end{subfigure}

\caption{Plots of the consistent region for different values of $\sigma$ in the partial payoff information observation model}
\label{fig: payoff_results}
\end{figure}

\subsubsection{Testing for linear properties} We also illustrate via simulation how our framework can test the ability of linear properties to explain observed behavior. In particular, here we test whether a set of observations is likely to be explained by a zero-sum game. We consider entry games as defined in the previous section, and assume the observer wants to test whether observations were generated by a game that is approximately zero-sum, without any information on the parametric form of the game (the observer does not know the game is an entry game).

\begin{figure}
\begin{subfigure}{0.45\textwidth}
  \centering
  \includegraphics[width=1.1\linewidth]{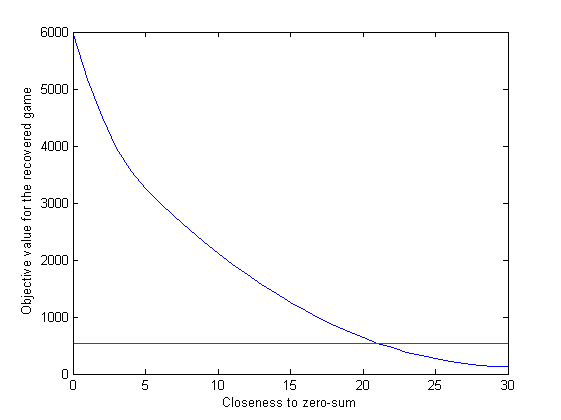}
  \caption{Payoff information setting with $\sigma=0.5$}
\end{subfigure}
\begin{subfigure}{0.45\textwidth}
  \centering
  \includegraphics[width=1.1\linewidth]{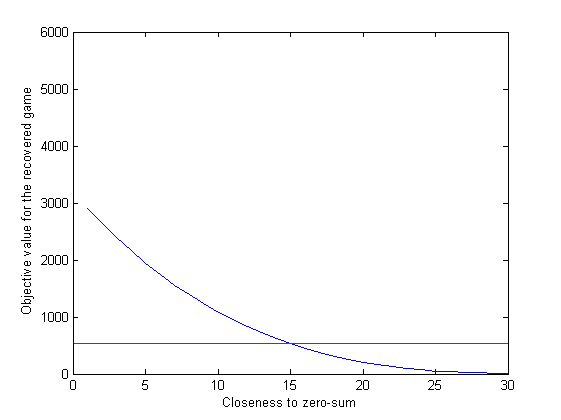}
  \caption{Payoff shifter information setting with $\sigma=0.5$, $\sigma_s=10$}
\end{subfigure}
\caption{Testing for zero-sum with respect to the $1$-norm}
\label{fig:zerosum}
\end{figure}

Formally, we say a game $G=(G_1,G_2)$ is $\varepsilon$-zero-sum with respect to the $p$-norm if and only if $\Vert G_1+G_2 \Vert_{p} \leq \varepsilon$. Note that a game being $\varepsilon$-zero-sum is a linear property and therefore can be included in our framework. The smaller the value of $\varepsilon$, the more stringent the condition is and the closer $G$ must be to a zero-sum game. We use $l=500$, $\sigma=0.5$, $\sigma_s=10$ in all simulations.

As before, we pessimistically assume the observer sees $\delta$ such that 
\\$P(d_2(G^1,...,G^l|G) \leq \delta) \approx 0.99$, that is, $\delta=537.5$ for $l=500$, $\sigma=0.5$. Figure~\ref{fig:zerosum} shows for which values of $\varepsilon$ one can recover a $\varepsilon$-zero-sum game with objective value less than $537.5$ that explains the observations for different values of $\gamma$ and $\theta$. Values of $\varepsilon$ to the left of the intersection between the red and the blue line are impossible, while values to the right of this intersection indicate there is a $\varepsilon$-zero-sum game that explain the observations. In both cases, we see that no zero-sum game or game close to being zerosum is a good explanation for the observations; in the payoff information setting, no game less than $21$-zerosum explains the observations, while in the payoff shifter setting, no game less than $15$-zerosum explains the observations.

\subsection{Multiplayer Cournot competition}\label{sim: cournot}

 In this section, we run simulations on a Cournot competition with varying number of players. See Section~\ref{sec: cournot} for a discussion of how our framework can be modified to accommodate Cournot games with many players and an action set of infinite size for each player. All simulations were performed on a laptop with an Intel Core i7-4700MQ at 2.40GHz and $16$ GB RAM.

\subsubsection{Generating the games} Let $n$ be the number of players, and $q_i$ the production level of player $i$. We fix a parameter $\alpha=0.05$, and set the price function to be given by $P(q_1,...,q_n)=1-\alpha \sum\limits_{i=1}^n q_i$; the price function is known to the observer. We fix the form of the cost function to be linear; that is, the cost of producing $q_i$ of goods incurred by player $i$ is given by $c_i(q_i)=a_i q_i + b_i$. Without loss of generality, we set $b_i=0$: $b_i$ does not affect the maximization problem nor the first order condition solved by player $i$, and hence does not impact the chosen production level of the players. 

We generate $n_g=10$ underlying Cournot games with heterogeneous, linear cost functions $c_i(x)=a_i x$ as follows:
\begin{itemize}
\item We first set $\hat{a}_i=0.01$ for every player $i$.
\item We generate each of the $n_g$ games by adding i.i.d. truncated Gaussian noise $X_i$ with mean $0$ and standard deviation $0.01$ to the $\hat{a}_i$'s. I.e., $a_i=0.01+X_i$ where $X_i$ can be written as $X_i=max(Z_i,-0.01)$ and $Z_i$ is a non-truncated Gaussian with mean $0$, standard deviation $0.01$. This ensures the $a_i$'s are always non-negative, hence the  production costs are always non-decreasing. 
\end{itemize}
Note that the same $n_g$ games are used in all plots and simulations.

For each of the $n_g$ games, we then generate $l$ perturbed games by adding truncated Gaussian noise with standard deviation $\sigma=0.001$ to each of the $a_i$'s. As before, the noise is truncated to ensure non-negativity of the perturbed $a_i$'s. We then solve the first-order condition to obtain equilibrium observations and note that all obtained $q_i$'s are positive with extremely high probability.

\subsubsection{Observer's problem and simulation results}

We assume the observer wants to recover a cost that is polynomial of chosen degree $d \geq 1$. I.e., the observer parameterizes the cost functions in the following way: $c_i(x)=a_i(d) x^d + a_i(d-1) x^{d-1}+...+a_i(1) x$; we always assume that for every player $i$, every perturbed game $k$, $a_i(0)=0$ and $a_i^k(0)=0$ for simplicity (not producing anything costs the players nothing); we note that this without loss of generality, as a constant shift does not change the utility-maximizing strategies of the players. The observer also knows the perturbed cost function $c_i^k$'s of the perturbed games are convex. The program then solved by the observer is derived from the results of Section~\ref{sec: cournot}. Throughout this subsection, all results are averaged over the $n_g$ games originally generated: for each game, we measure the diameter of the consistent set, the time taken to recover a game within the set, and the time taken to compute the diameter, then average it over the $n_g$ games we consider.

\begin{figure}
\centering
\includegraphics[width=0.5\linewidth]{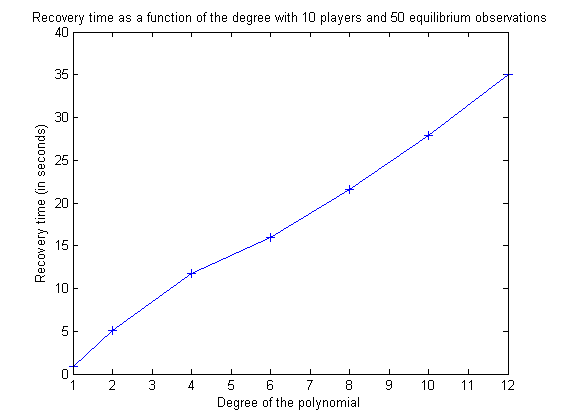}
 \caption{Average recovery time as a function of the degree with $10$ players, $10$ equilibrium observations}
\label{fig: degree_time}
\end{figure}

Figure~\ref{fig: degree_time} shows the time it takes the observer to recover a game within the consistent set as a function of the degree of the polynomial cost function considered, fixing the number of players to $10$ and the number of perturbed games/equilibrium observations to $50$ per underlying game. We see the recovery time is less than a minute, even when considering polynomials of degree $10$ or $12$, and that said time increases roughly linearly in the degree of the polynomial used for recovery. This allows for recovery within minutes for high-degree polynomials, even when using the minimal computing power of a personal laptop. 

Figure~\ref{fig: diameter_value} shows the value of the diameter of the consistent set when the observer assumes the cost function is linear. The diameter is plotted as a function of the number of observations, in the presence of a fixed number of players (10). The figure shows the diameter decreases quickly as the number of equilibrium observations. When only one equilibrium observation is available, the diameter is  $3.3\times10^{-3}$, which is $35 \%$ of the expected true cost $\hat{a}_i=0.01$; at $10$ equilibrium observations, it is $3.3 \times 10^{-4}$, i.e. only $3.5 \%$ of $\hat{a}_i$, and at $100$ equilibrium observations, it is $6.6 \times 10^{-5}$, i.e. $0.66 \%$ of $\hat{a}_i$. Hence, very few equilibrium observations are necessary to recover the underlying game accurately. 

\begin{figure}
\centering
\includegraphics[width=0.5\linewidth]{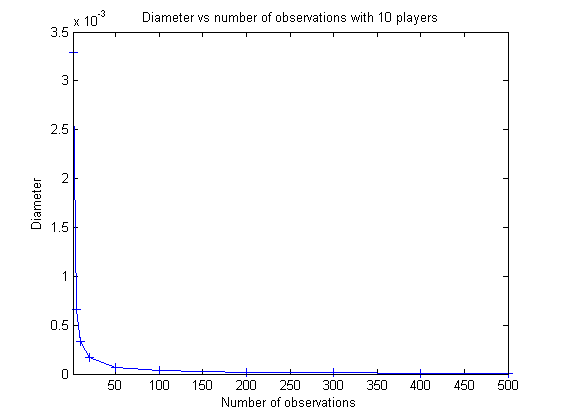}
\caption{Average diameter as a function of the number equilibrium observations}
\label{fig: diameter_value}
\end{figure}

\begin{figure}[H]
\begin{subfigure}{0.24\textwidth}
  \centering
  \includegraphics[width=1.0\linewidth]{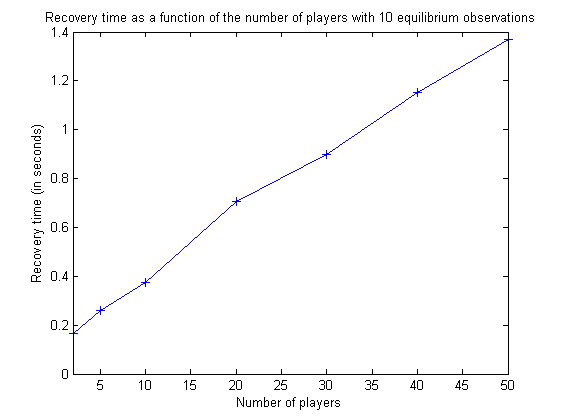}
  \caption{Recovery time\\vs players}
\label{fig: player_time_a}
\end{subfigure}
\begin{subfigure}{0.24\textwidth}
  \centering
  \includegraphics[width=1.0\linewidth]{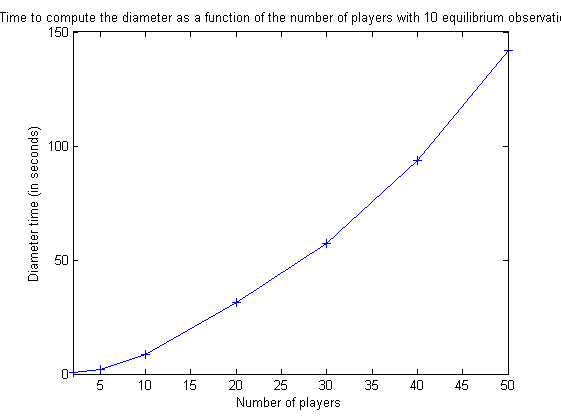}
  \caption{Diameter time \\vs players}
\label{fig: player_time_b}
\end{subfigure}
\begin{subfigure}{0.24\textwidth}
  \centering
  \includegraphics[width=1.0\linewidth]{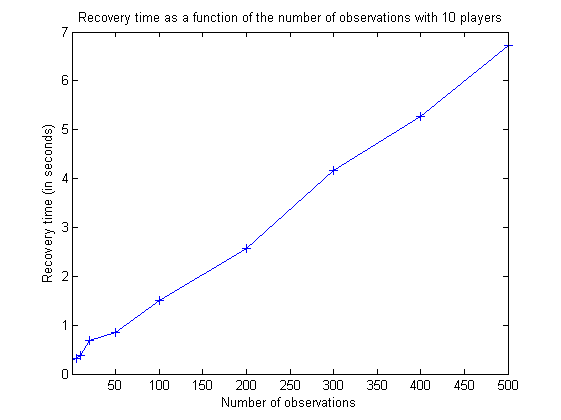}
  \caption{Recovery time\\vs equilibria}
\label{fig: eq_time_a}
\end{subfigure}
\begin{subfigure}{0.24\textwidth}
  \centering
  \includegraphics[width=1.0\linewidth]{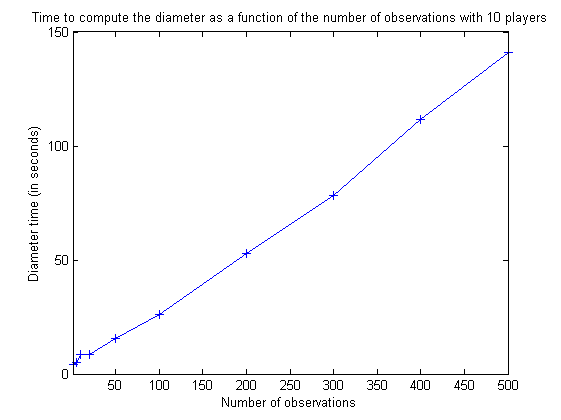}
  \caption{Diameter time\\vs equilibria}
\label{fig: eq_time_b}
\end{subfigure}
\caption{Average recovery/diameter time as a function of the number of observations/number of players, when recovering linear costs}
\end{figure}

Figures~\ref{fig: player_time_a} and~\ref{fig: player_time_b} show the time it takes the observer to recover a game within the consistent set and to compute its diameter as a function of the number of players, fixing the degree of the polynomial to $1$ and equilibrium observations to $50$ per underlying game. Similarly to before, we see that the recovery time increases roughly linearly as a function of the number of players, and it takes less than $1.5$ seconds on average to recover a consistent game, even with $50$ players. The diameter time increases roughly quadratically: this was expected, as both the number of programs to solve and the size of each program increase linearly in the number of players. Hence, while computing the diameter scales superlinearly, it remains computable within minutes, even with larger numbers of players, on a personal laptop. To the best of our knowledge, no previous approach to understanding the games consistent with observed behavior offers comparable scalability.

Finally, Figures~\ref{fig: eq_time_a} and~\ref{fig: eq_time_b} show the time it takes the observer to recover a game within the consistent set and to compute its diameter as a function of the number of equilibria, fixing the degree of the polynomial to $1$ and the number of players to $10$ per underlying game. Both the recovery time and the diameter time scale linearly with the number of equilibrium observations. Unlike before, while the size of each subprogram to solve to compute the diameter increases, the number of such programs is independent of the number of observations, allowing for extremely good scalability of our framework as function of the number of equilibrium observations. In practice, the number of players is fixed but the observer may see more observations over time, and our framework is able to deal with such an increasing number of observations.

\section*{Acknowledgements}
We thank Federico Echenique, Denis Nekipelov, Matt Shum, and Vasilis Syrgkanis for extremely helpful comments and suggestions. Ziani's research was funded in part by NSF grant CNS-1254169. Chandrasekaran's research was funded in part by NSF awards CCF-1350590 and CCF-1637598, Air Force Office of Scientific Research grants FA9550-14-1-0098 and
FA9550-16-1-0210, and the Sloan research fellowship. Ligett's research was funded in part by ISF grant 1044/16, NSF grants CNS-1254169 and CNS-1518941, a subcontract under the DARPA Brandeis Project, and the Hebrew University Cybersecurity Research Center in conjunction with the Israel National Cyber Bureau in the Prime Minister's Office. Ligett's work was done in part while the author was visiting the Simons Institute for the Theory of Computing at Berkeley.

%
%

\bibliographystyle{abbrv}
\bibliography{bibliography}

%
%

\newpage

\appendix

\section{Proof of performance of the algorithm}\label{sec: diameter_proof}
Consider the following optimization program:
\begin{equation*}
\begin{array}{ll@{}ll}
P_{\delta}= & \sup\limits_{\tilde{G},\hat{G},\gamma} \; & \displaystyle \gamma\\
& \text{s.t.} & \tilde{G} \in S_d(\delta)\\
		&&\hat{G} \in S_d(\delta)\\
		&&\max(\Vert \tilde{G}_1-\hat{G}_1 \Vert_{\infty},\Vert \tilde{G}_2-\hat{G}_2 \Vert_{\infty}) \geq \gamma\\

\end{array}
\end{equation*}
Clearly, $P_{\delta}=D(S_d(\delta))$, simply by noting that the program is a rewriting of Definition~\ref{def: diameter}. Now if $ P_{\delta} = \max\limits_{(i,j) \in m_1 m_2} \max (P_{\delta,1}(i,j),P_{\delta,2}(i,j))$, then we have shown that $\mathcal{A}(\delta)=D(S_d(\delta))$. This holds because:
\begin{itemize}
\item For every player $p$ and action profile $(i,j)$, if $\tilde{G}_p(i,j)-\hat{G}_p(i,j) \geq \gamma$ then $\max(\Vert \tilde{G}_1-\hat{G}_1 \Vert_{\infty},\Vert \tilde{G}_2-\hat{G}_2 \Vert_{\infty}) \geq \gamma$. Hence $P_{\delta} \geq \max\limits_{p,(i,j)} P_{\delta,p}(i,j)$.
\item If $\max(\Vert \tilde{G}_1-\hat{G}_1 \Vert_{\infty},\Vert \tilde{G}_2-\hat{G}_2 \Vert_{\infty}) \geq \gamma$, then there exists a player $p$ and a set of actions $(i,j)$ such that $\tilde{G}_p(i,j)-\hat{G}_p(i,j) \geq \gamma$ w.l.o.g. (remember $\tilde{G}$ and $\hat{G}$ play symmetric roles) and so one of the $P_{\delta,p}(i,j)$ must have objective value at least $\gamma$. This means that $P_{\delta} \leq \max\limits_{p,(i,j)} P_{\delta,p}(i,j)$.
\end{itemize}

\section{Proof of recovery lemma under infinite norm and payoff information}\label{infinite_norm_proof}

\begin{proof}
For simplicity of notation, we drop the indices $p$. We first remark that $(G,G^1,\ldots,G^l)$ is feasible for Program~\eqref{primal_program_payoff}; as $(\hat{G},\hat{G}^1,\ldots,\hat{G}^l)$ is optimal, it is necessarily the case that 

$$\max\limits_k\Vert \hat{G}-\hat{G}^k \Vert_{\infty}  \leq \max\limits_k  \Vert G-G^k \Vert_{\infty} \leq \delta.$$ 

Let us write $\Delta G=G-\hat{G}$. We know that for all $k$, $e^{k \; \prime} G^k=e^{k \; \prime} \hat{G}^k=v^k$, and thus $e^{k \; \prime} (G^k -\hat{G}^k)=0$. We can write 
\begin{align*}
E \Delta G &= (e_1'(G-\hat{G}) \; \ldots \; e_l'(G-\hat{G}))' 
\\& =(e_1'(G-G^1+G^1-\hat{G}^1+\hat{G^1}-\hat{G})
\\& \hspace{7em} \ldots \; e_l'(G-G^l+G^l-\hat{G}^l+\hat{G^l}-\hat{G}))'
\\& =(e_1'(G-G^1+\hat{G^1}-\hat{G}) \; \ldots \; e_l'(G-G^l+\hat{G^l}-\hat{G}))'.
\end{align*}
Let $x_k=G-G^k+\hat{G^k}-\hat{G}$. We then have $\Vert E \Delta G \Vert_{\infty} \leq \max\limits_k \Vert x_k \Vert_{\infty}$ as $e^k$ has only elements between $0$ and $1$. Therefore, by the triangle inequality,
$$ \Vert E \Delta G \Vert_{\infty} \leq 2\delta.$$
It immediately follows that $\Vert \Delta G \Vert_{\infty} \leq 2 \Vert E^{-1} \Vert_{\infty}  \cdot \delta.$
\end{proof}

\section{Writing Cournot constraints efficiently} 

\subsection{Casting the equilibrium constraints as linear constraints}\label{sec: cournot_eq}

When action profile $(q_1,\ldots,q_n)$ is chosen, player $i$ gets utility $u_i(q_i,q_{-i})=q_i P(q_1,\ldots,q_n) - c_i(q_i)$ where $q_{-i}$ denotes the production levels of all players but $i$. A pure action profile $q^*=(q_1^*,\ldots,q_n^*)$ is a Nash Equilibrium if and only if for all players $i$, $q_i^*$ maximizes $u_i(q_i,q_{-i}^*)$; as $P$ is concave and $c_i$ is convex, $u_i$ is convex in $q_i$ and the equilibrium condition is equivalent to the first order condition
\begin{equation}\label{eq: FOC}
q_i \frac{\partial P}{\partial q_i}(q_1,\ldots,q_n) + P(q_1,\ldots,q_n) =c_i'(q_i), \; \forall i
\end{equation}
Then, combining Equations~\ref{eq: FOC} and~\ref{eq: poly_cost}, the equilibrium constraints become
\begin{equation}\label{eq: cournot_eq_constraint}
q_i \frac{\partial P}{\partial q_i}(q_1,\ldots,q_n) + P(q_1,\ldots,q_n) =\sum\limits_{k=1}^d k a_i(k) q_i^{k-1}
\end{equation}
which are linear and tractable in the variables $(a_i(0),a_i(1),\ldots,a_i(d)))$, as long as $P(q_1,\ldots,q_n),\frac{\partial P}{\partial q_i}(q_1,\ldots,q_n)$ can be efficiently computed given observations $q_1,\ldots,q_n$. Such equilibrium constraints can be incorporated into our framework. 

\subsection{Casting the convexity constraints as SDP constraints}\label{sec: cournot_convex}

We need to be able to deal with``$c_i$ is convex polynomial of degree $d$'' constraints for all $i$, in a computationally efficient manner. This constraint can be rewritten as ``$c_i''$ is a non-negative polynomial of degree $d$.'' Fortunately, this is a classic constraint in the realm of convex optimization, and can be dealt with in the following ways:
\begin{itemize}
\item If $d=1$, then $c_i(q_i)=a_i(1) q_i + a_i(0)$ ($c_i''=0$) is always convex. In this case, no constraint need be added.
\item If $d \geq 2$, then we need to ensure that $c''_i$ is non-negative in all points. It is known that a univariate polynomial is non-negative if and only if it can be written as a sum-of-squares; such constraints can efficiently be transformed into tractable semidefinite constraints. (For more details on the SDP formulation of sum-of-squares constraints, see~\cite{parrilo2004sum}.)
\end{itemize}


\showsection{

\section{Obtaining the dual program}\label{proof_dual_program}

We have
\begin{align*}
& L(G^k,G,\lambda^k_{ii'},\lambda_0,\lambda_1,\mu)\\
&=d_2(G^1,\ldots,G^k|G)
+ \sum\limits_{k,i,i'} \lambda^k_{ii'} \tilde{e}^{k \; \prime}_{ii'}G^k
+ \mu (-\sum\limits_{k,i,i'}\tilde{e}^{k \; \prime}_{ii'}G^k -\varepsilon) 
+ \lambda_1' (G-\mathbbm{1})-\lambda_0'G\\
& = d_2(G^1,\ldots,G^k|G)
+ \sum\limits_{k,i,i'} (\lambda^k_{ii'}-\mu) \tilde{e}^{k \; \prime}_{ii'}G^k
+ (\lambda_1-\lambda_0)' G 
- \mu \varepsilon -  \mathbbm{1}'\lambda_1\\
& = d_2(G^1,\ldots,G^k|G)
+ \sum\limits_{k,i,i'} \mu^k_{ii'} \tilde{e}^{k \; \prime}_{ii'}G^k
+ (\lambda_1-\lambda_0)' G 
- \mu \varepsilon -  \mathbbm{1}'\lambda_1
\end{align*}
with 
\begin{align}
\label{cstrt:1}
\mu^k_{ii'}+\mu=\lambda^k_{ii'}
\end{align}

Our goal is to find $h(\lambda^k_{ii'},\lambda_0,\lambda_1,\mu,\mu^k_{ii'})=\inf\limits_{G,G^k} L(G^k,G,\lambda_0,\lambda_1,\mu,\mu^k_{ii'})$ in order to write the dual. Since $L$ is a convex function of $G^1,\ldots,G^l,G$, the first order condition needs to hold at a minimum in $G^1,\ldots,G^l,G$, unless this minimum is $-\infty$. Remark that for all $k$, 
\begin{align*}
\frac{\partial L}{\partial G^k}(G^k,G,\lambda_0,\lambda_1,\mu,\mu^k_{ii'})
&=2 (G^k-G)+\sum\limits_{i,i'} \mu^k_{ii'} \tilde{e}^{k}_{ii'}
\end{align*}
and 
\begin{align*}
\frac{\partial L}{\partial G}(G^k,G,\lambda_0,\lambda_1,\mu,\mu^k_{ii'})
&=2 \sum\limits_{k=1}^l (G-G^k)+(\lambda_1-\lambda_0)
\end{align*}
Therefore, the first order condition is given by
\begin{align*}
& 2 (G^k-G)+\sum\limits_{i,i'} \mu^k_{ii'} \tilde{e}^{k}_{ii'}=0 \; \forall k
\\ & 2 \sum\limits_{j=1}^l (G-G^j)+(\lambda_1-\lambda_0)=0
\end{align*}
that can be rewritten
\begin{align}
\label{value_G}
&G^k=G-\frac{1}{2}  (\sum\limits_{i,i'} \mu^k_{ii'} \tilde{e}^{k}_{ii'}) \; \forall k
\\&G=\frac{1}{l} \sum_{j=1}^l G^j -\frac{1}{2l} (\lambda_1-\lambda_0)
\end{align}

and implies the following system of equalities that must hold whenever the first order condition is satisfied:
\begin{align}
\label{first_order}
&G^k=\frac{1}{l-1} \sum_{j \neq k}^l G^j-\frac{l}{2(l-1)}  (\frac{\lambda_1-\lambda_0}{l}+\sum\limits_{i,i'} \mu^k_{ii'} \tilde{e}^{k}_{ii'}) \; \forall k
\\
&G=\frac{1}{l} \sum_{j=1}^l G^j -\frac{1}{2l} (\lambda_1-\lambda_0)
\end{align}

The system has a solution if and only if the system of equations in~(\ref{first_order}) has a solution. Let us write $x(i,j)=(G^1(i,j),\ldots,G^l(i,j))'$, $b^k=-\frac{l}{2(l-1)} ( \frac{\lambda_1-\lambda_0}{l} +\sum\limits_{i,i'} \mu^k_{ii'} \tilde{e}^{k}_{ii'})$ for all $k$, $b(i,j)=(b^1(i,j),\ldots,b^l(i,j))$, and $A \in \mathbb{R}^{l \times l}$ the matrix that has $1$'s on the diagonal and $-\frac{1}{l-1}$ for every other coefficient. Furthermore, let $\mathcal{R}(A)$ denote the range of $A$, and $\mathcal{N}(A)$ its nullspace. Then there exists a solution to~(\ref{first_order}) iff there exists a solution to $Ax(i,j)=b(i,j)$ for all $(i,j)$, i.e., if and only if $b(i,j) \in \mathcal{R}(A)$ for all $(i,j)$. The following statements characterize $\mathcal{R}(A)$ and $\mathcal{N}(A)$.

\begin{claim}
$\rank(A)=l-1$, $\dim \mathcal{N}(A)=1$
\end{claim}

\begin{proof}
Let us write $A=(a_1,a_2,\ldots,a_l)$ where $a_k \in \mathbb{R}^l$ has $1$ as a $k^{th}$ coordinate and has $-\frac{1}{l-1}$ for all other coordinates. Therefore, for all $i$, 
\begin{align*}
\sum\limits_{k=1}^l a_k(i)=1-\sum\limits_{k \neq i}^l \frac{1}{l-1}=0,
\end{align*}
so $\sum\limits_{k=1}^l a_k=0$ and, necessarily, $\rank(A) \leq l-1$. Now $\rank(A) \geq l-1$ because $(-1,0,\ldots,0,1)'$, $(-1,0,\ldots,0,1,0)'$, $(-1,0,\ldots,0,1,0,0)'$,$\ldots$, $(-1,1,0,\ldots,0)'$ are $l-1$ linearly independent vectors that are in the range of $A$, as they are eigenvectors for eigenvalue $\frac{k}{k-1}$. $\dim \mathcal{N}(A)=1$ follows from the rank-nullity theorem.
\end{proof}

\begin{claim}
\label{range}
$\mathcal{R}(A)=\{x \in \mathbb{R}^l / \sum\limits_{k=1}^l x_k=0 \}$
\end{claim}

\begin{proof}
Let $x \in \mathcal{R}(A)$, $x=Ay$. Write $A=(a_1,\ldots,a_l)'$, then $x=(a_1'y,a_2'y,\ldots,a_l'y)'$, so $\sum\limits_{k=1}^l x_k=(\sum\limits_{k=1}^l a_k)'y=0'y=0$. Therefore, $\mathcal{R}(A)\subseteq \{x \in \mathbb{R}^l / \sum\limits_{k=1}^l x_k=0 \}$. The rest follows from $\{x \in \mathbb{R}^l / \sum\limits_{k=1}^l x_k=0 \}$ being a linear subspace of $\mathcal{R}(A)$ that has dimension $l-1$.
\end{proof}

\begin{corollary}
There exists a solution to the first order conditions if and only if
\begin{align}
\label{cstrt:2}
\sum\limits_{k=1}^l b^l= -\frac{l}{2(l-1)} \sum\limits_{k=1}^l (\frac{\lambda_1-\lambda_0}{l} +\sum\limits_{i,i'} \mu^k_{ii'} \tilde{e}^{k}_{ii'}) =0
\end{align} 
\end{corollary}

\begin{proof}
Follows immediately from claim~\ref{range}.
\end{proof}

\begin{claim}
\label{nullspace}
$\mathcal{N}(A)=\spn(1,\ldots,1)'$
\end{claim}

\begin{proof}
$A(1,\ldots,1)'=0$ so $\spn(1,\ldots,1)' \subseteq \mathcal{N}(A)$ and $\dim \spn (1,\ldots,1)'=\dim \mathcal{N}(A)=1$.
\end{proof}

\begin{corollary}
If equation~(\ref{cstrt:2}) holds, the set of solutions $S(i,j)$ of $Ax(i,j)=b(i,j)$ is given by
\begin{align*}
S(i,j)=\{(\alpha_{ij}+\tilde{G}^1(i,j),\ldots,\alpha_{ij}+\tilde{G}^l(i,j))' / \alpha_{ij} \in \mathbb{R} \}
\end{align*}
for any $(\tilde{G}^1,\ldots,\tilde{G}^l)$ that satisfies the first order conditions.
In particular, the set of solutions $S$ to the first order conditions is given by
\begin{align*}
S=\{(M+\tilde{G}^1,\ldots,M+\tilde{G}^l) / M \in \mathbb{R}^{l \times l} \}
\end{align*}
for any $(\tilde{G}^1,\ldots,\tilde{G}^l)$ that satisfies the first order conditions.
\end{corollary}

\begin{claim}
\label{solution}
$\forall k$, let $\tilde{G}^k=\frac{l-1}{l}b^k$. Then $(\tilde{G}^1,\ldots,\tilde{G}^l)$ satisfy the first order conditions. 
\end{claim}

\begin{proof}
Take any $k$, $\frac{1}{l-1}\sum\limits_{j \neq k} G^j+b^k=\frac{1}{l-1} \cdot \frac{l-1}{l}\sum\limits_{j \neq k} b^j+b^k=-\frac{1}{l} b^k+b^k=\frac{l-1}{l}b^k=G^l$ as $\sum\limits_{j \neq k} b^j=-b^k$ from equation~\ref{cstrt:2}.
\end{proof}

Putting it all together, we obtain the following lemma: 
\begin{lemma}
The first order conditions are satisfied if and only if 
\begin{align*}
-\frac{l}{2(l-1)} \sum\limits_{k=1}^l ( \frac{\lambda_1-\lambda_0}{l}+\sum\limits_{i,i'} \mu^k_{ii'} \tilde{e}^{k}_{ii'}) =0
\end{align*} 
in which case the set $S$ of $(G^1,\ldots,G^l)$ satisfying the first order conditions is given by
\begin{align*}
\label{first_order_solution}
S=\{(M-\frac{1}{2} (\frac{\lambda_1-\lambda_0}{l} +\sum\limits_{i,i'} \mu^{1}_{ii'} \tilde{e}^{1}_{ii'}),\ldots,M+\frac{1}{2} ( \frac{\lambda_1-\lambda_0}{l} +\sum\limits_{i,i'} \mu^{l}_{ii'} \tilde{e}^{l}_{ii'}) / M \in \mathbb{R}^{l \times l} \}
\end{align*} 
\end{lemma}

We now have, when constraints~\eqref{cstrt:1} and~\eqref{cstrt:2} are satisfied, and recalling that equation~\eqref{value_G} must hold, that 

\begin{align*}
h(\mu^k_{ii'},\lambda^k_{ii'},\lambda_0,\lambda_1) &= \frac{1}{4} \sum_{k=1}^l (\sum\limits_{i,i'} \mu^k_{ii'} \tilde{e}^{k}_{ii'})'  (\sum\limits_{i,i'} \mu^k_{ii'} \tilde{e}^{k}_{ii'})+\sum\limits_{k,i,i'} \mu^k_{ii'} \tilde{e}^{k \; \prime}_{ii'}(G-\frac{1}{2} \sum\limits_{i,i'} \mu^k_{ii'} \tilde{e}^{k}_{ii'}) +(\lambda_1-\lambda_0)'G-\mathbbm{1}' \lambda_1-\mu \varepsilon
\\&=\frac{1}{4} \sum_{k=1}^l  (\sum\limits_{i,i'} \mu^k_{ii'} \tilde{e}^{k}_{ii'})'  (\sum\limits_{i,i'} \mu^k_{ii'} \tilde{e}^{k}_{ii'})+\sum\limits_{k,i,i'} \mu^k_{ii'} \tilde{e}^{k \; \prime}_{ii'}(-\frac{1}{2} \sum\limits_{i,i'} \mu^k_{ii'} \tilde{e}^{k}_{ii'})-\mathbbm{1}' \lambda_1-\mu \varepsilon
\\&=-\frac{1}{4} \sum_{k=1}^l  (\sum\limits_{i,i'} \mu^k_{ii'} \tilde{e}^{k}_{ii'})'  (\sum\limits_{i,i'} \mu^k_{ii'} \tilde{e}^{k}_{ii'})-\mathbbm{1}' \lambda_1-\mu \varepsilon
\end{align*}
and otherwise, $h(\mu^k_{ii'},\lambda^k_{ii'},\lambda_0,\lambda_1)=-\infty$. Recall $\lambda^k_{ii'},\lambda_0,\lambda_1 \geq 0$ $\forall k,i,i'$, and get the following dual:

\begin{equation*}
\begin{array}{ll@{}ll}
(D)=\max\limits_{\mu^k_{ii'},\lambda_0,\lambda_1}  &-\frac{1}{4} \sum_{k=1}^l (\sum\limits_{i,i'} \mu^k_{ii'} \tilde{e}^{k}_{ii'})'  (\sum\limits_{i,i'} \mu^k_{ii'} \tilde{e}^{k}_{ii'})-\mathbbm{1}' \lambda_1-\mu \varepsilon \\
\text{s.t.}	&\lambda_1-\lambda_0+ \sum\limits_{k,i,i'} \mu^k_{ii'} \tilde{e}^{k}_{ii'} = 0
\\		&\mu+\mu^k_{ii'}=\lambda^k_{ii'}
\\		&\lambda^k_{ii'},\lambda_0,\lambda_1 \geq 0                  	
\end{array}
\end{equation*}
This can further be rewritten as:
\begin{equation*}
\begin{array}{ll@{}ll}
(D)=\max\limits_{\mu^k_{ii'},\lambda_0,\lambda_1}  &-\frac{1}{4}\sum_{k=1}^l (\sum\limits_{i,i'} \mu^k_{ii'} \tilde{e}^{k}_{ii'})'  (\sum\limits_{i,i'} \mu^k_{ii'} \tilde{e}^{k}_{ii'})-\mathbbm{1}' \lambda_1 - \mu \varepsilon\\
\text{s.t.}	&\lambda_1-\lambda_0+ \sum\limits_{k,i,i'} \mu^k_{ii'} \tilde{e}^{k}_{ii'} = 0
\\		&\mu+\mu^k_{ii'} \geq 0
\\		&\lambda_0,\lambda_1 \geq 0                  	
\end{array}
\end{equation*}

\section{Proof of the degeneracy-accuracy trade-off}\label{proof_degeneracy_accuracy_trade_off}

\begin{claim}\label{lower_bound}
$P(\varepsilon)$ is a non-decreasing function of $\varepsilon$. In particular, if $\varepsilon_2 > \varepsilon_1 \geq 0$, then $\frac{\varepsilon_1^2}{\varepsilon_2^2}  P(\varepsilon_2) \geq P(\varepsilon_1)$.
\end{claim}

\begin{proof}
Since $\frac{\varepsilon_1}{\varepsilon_2} \leq 1$, we have $0 \leq \frac{\varepsilon_1}{\varepsilon_2} G \leq 1$. Therefore, one can take an optimal solution of $P(\varepsilon_2)$ and multiply all variables by $\frac{\varepsilon_1}{\varepsilon_2}$, to get a solution that is feasible for $P(\varepsilon_1)$; this solution has objective $(\frac{\varepsilon_1}{\varepsilon_2})^2 P(\varepsilon_2)$.
\end{proof}
This immediately gives the first part of the theorem.

\begin{lemma}\label{upper_bound}
Let $\varepsilon_2 \geq \varepsilon_1 > 0$, and suppose $P(\varepsilon_2) >0$. Then: 
\begin{align*}
P(\varepsilon_1) \geq (1-2 \frac{\varepsilon_2-\varepsilon_1}{\varepsilon_2}) P(\varepsilon_2) - \sqrt{l} m \frac{\varepsilon_2-\varepsilon_1}{\varepsilon_2} \sqrt{P(\varepsilon_2)}
\end{align*}
\end{lemma}

\begin{proof}
Recall that 
\begin{equation}
\begin{array}{ll@{}ll}
D(\varepsilon)=\max\limits_{\mu^k_{ii'},\lambda_0,\lambda_1}  &-\frac{1}{4}\sum_{k=1}^l (\sum\limits_{i,i'} \mu^k_{ii'} \tilde{e}^{k}_{ii'})'  (\sum\limits_{i,i'} \mu^k_{ii'} \tilde{e}^{k}_{ii'})-\mathbbm{1}' \lambda_1 - \mu \varepsilon\\
\text{s.t.}	&\lambda_1-\lambda_0+ \sum\limits_{k,i,i'} \mu^k_{ii'} \tilde{e}^{k}_{ii'} = 0
\\		&\mu+\mu^k_{ii'} \geq 0
\\		&\lambda_0,\lambda_1 \geq 0                  	
\end{array}
\end{equation}
Take any optimal solution $(\mu^k_{ii'},\mu,\lambda_1)$ of $D(\varepsilon_2)$, it is feasible for $D(\varepsilon_1)$ as the constraints in the dual do not depend on the value of $\varepsilon$. Therefore, 
\begin{align*}
-\frac{1}{4}\sum_{k=1}^l (\sum\limits_{i,i'} \mu^k_{ii'} \tilde{e}^{k}_{ii'})'  (\sum\limits_{i,i'} \mu^k_{ii'} \tilde{e}^{k}_{ii'})-\mathbbm{1}' \lambda_1 - \mu \varepsilon_1 \leq D(\varepsilon_1)
\end{align*}
Note that since strong duality holds, by the KKT conditions, 
\begin{align*}
\frac{1}{4}\sum_{k=1}^l (\sum\limits_{i,i'} \mu^k_{ii'} \tilde{e}^{k}_{ii'})'  (\sum\limits_{i,i'} \mu^k_{ii'} \tilde{e}^{k}_{ii'})=P(\varepsilon_2)=D(\varepsilon_2)
\end{align*}
and therefore 
\begin{align*}
-D(\varepsilon_2)-\mathbbm{1}' \lambda_1 - \mu \varepsilon_1 \leq D(\varepsilon_1)
\end{align*}
Since 
\begin{align*}
D(\varepsilon_2)=-\frac{1}{4}\sum_{k=1}^l (\sum\limits_{i,i'} \mu^k_{ii'} \tilde{e}^{k}_{ii'})'  (\sum\limits_{i,i'} \mu^k_{ii'} \tilde{e}^{k}_{ii'})-\mathbbm{1}' \lambda_1 - \mu \varepsilon_2=-D(\varepsilon_2)-\mathbbm{1}' \lambda_1 - \mu \varepsilon_2
\end{align*}
we have
\begin{align*}
2D(\varepsilon_2)+\mu \varepsilon_2=-\mathbbm{1}' \lambda_1
\end{align*}
and therefore, 
\begin{align*}
D(\varepsilon_2)+\mu (\varepsilon_2-\varepsilon_1) = -D(\varepsilon_2)-\mathbbm{1}' \lambda_1 - \mu \varepsilon_1 \leq D(\varepsilon_1)
\end{align*}
Now, let us try to lower bound $\mu$. We first remark that necessarily, $\mu \leq 0$. If not, 
\begin{align*}
D(\varepsilon_2)=-\frac{1}{4}\sum_{k=1}^l (\sum\limits_{i,i'} \mu^k_{ii'} \tilde{e}^{k}_{ii'})'  (\sum\limits_{i,i'} \mu^k_{ii'} \tilde{e}^{k}_{ii'})-\mathbbm{1}' \lambda_1 - \mu \varepsilon < 0
\end{align*}
and strong duality cannot hold as $P(\varepsilon_2) \geq 0$.
Since 
\begin{align*}
\mu=\frac{1}{\varepsilon_2}(-2D(\varepsilon_2)-\mathbbm{1}' \lambda_1)
\end{align*}
it is enough to upper-bound $\mathbbm{1}' \lambda_1$.
Note that since $\lambda_1$ is always chosen to be as small as possible as a function of the $\mu^k_{ii'}$ in order to minimize the objective, we have the following coordinate by coordinate inequality:
\begin{align*}
\lambda_1=\max(0,\sum\limits_{k,i,i'} \mu^k_{ii'} \tilde{e}^{k}_{ii'}) \leq |\sum\limits_{k,i,i'} \mu^k_{ii'} \tilde{e}^{k}_{ii'}| \leq \sum\limits_{k} | \sum\limits_{i,i'} \mu^k_{ii'} \tilde{e}^{k}_{ii'}|
\end{align*}
by the triangle inequality. For simplicity, let us denote $X_k= | \sum\limits_{i,i'} \mu^k_{ii'} \tilde{e}^{k}_{ii'}|$.  
\begin{align*}
\sum_{k=1}^l |\sum\limits_{i,i'} \mu^k_{ii'} \tilde{e}^{k}_{ii'}|'  |\sum\limits_{i,i'} \mu^k_{ii'}\tilde{e}^{k}_{ii'}|=\sum_{k=1}^l (\sum\limits_{i,i'} \mu^k_{ii'} \tilde{e}^{k}_{ii'})'  (\sum\limits_{i,i'} \mu^k_{ii'}\tilde{e}^{k}_{ii'})=\sum_{k=1}^l X_k'  X_k = 4 D(\varepsilon_2)
\end{align*}
An upper bound on $\mathbbm{1}' \lambda_1 \leq \sum\limits_{k} \mathbbm{1}' X_k$ is therefore given by
\begin{equation*}
\begin{array}{ll@{}ll}
\max\limits_{X_k}  &\sum\limits_{k} \mathbbm{1}' X_k\\
\text{s.t.}	&\sum_{k=1}^l X_k'  X_k \leq 4 D(\varepsilon_2)
\end{array}
\end{equation*}
We can find an exact solution to this convex optimization problem by looking at its dual (Slater and therefore strong duality hold); the Lagrangian is given by $L(X_k,\lambda)=\sum\limits_{k} \mathbbm{1}' X_k-\lambda \sum_{k=1}^l X_k'  X_k + 4 \lambda  D(\varepsilon_2)$ with $\lambda \geq 0$ and the first order condition is $X_k=\frac{1}{2\lambda}  \mathbbm{1}$. Therefore, 
\begin{align*}
h(\lambda)=\inf\limits_{X_k} L(X_k,\lambda)=\frac{1}{4 \lambda} \sum_{k=1}^l \mathbbm{1}'  \mathbbm{1}  + 4 \lambda D(\varepsilon_2)
\end{align*} 
and the dual is given by
\begin{equation*}
\begin{array}{ll@{}ll}
\min\limits_{\lambda}  &\sum\limits_{k} \frac{1}{4 \lambda} \sum_{k=1}^l \mathbbm{1}'  \mathbbm{1}  + 4 \lambda D(\varepsilon_2)\\
\text{s.t.}	&\lambda \geq 0
\end{array}
\end{equation*}
The solution to the dual is $\lambda^*= \sqrt{\frac{\sum_{k=1}^l \mathbbm{1}'  \mathbbm{1}}{16 D(\varepsilon_2)}} \geq 0$ by the first order condition, as $P(\varepsilon_2)=D(\varepsilon_2) >0$ and we get 
\begin{align*}
h(\lambda^*)=\sqrt{\sum_{k=1}^l \mathbbm{1}'  \mathbbm{1}}  \sqrt{D(\varepsilon_2)} \leq \sqrt{l m^2} \sqrt{D(\varepsilon_2)} = \sqrt{l} m \sqrt{D(\varepsilon_2)}.
\end{align*}
So, $0 \leq \mathbbm{1}' \lambda_1 \leq \sqrt{l} m \sqrt{D(\varepsilon_2)}$ leading to
\begin{align*}
\mu=\frac{1}{\varepsilon_2}(-2D(\varepsilon_2)-\mathbbm{1}' \lambda_1) \geq \frac{1}{\varepsilon_2}(-2D(\varepsilon_2)-\sqrt{l} m \sqrt{D(\varepsilon_2)})
\end{align*}
and therefore, as $\varepsilon_2-\varepsilon_1 \geq 0$,
\begin{align*}
(1-2 \frac{\varepsilon_2-\varepsilon_1}{\varepsilon_2}) D(\varepsilon_2) - \sqrt{l} m \frac{\varepsilon_2-\varepsilon_1}{\varepsilon_2} \sqrt{D(\varepsilon_2)} \leq D(\varepsilon_2)+\mu (\varepsilon_2-\varepsilon_1) \leq D(\varepsilon_1)
\end{align*} 
\end{proof}

\begin{claim}\label{right-continuity}
$\forall \varepsilon \geq \varepsilon^*$, 
$$\lim_{h \to 0^+}P(\varepsilon+h) = P(\varepsilon)$$
 i.e. $P(.)$ is right-continuous on $[\varepsilon^*,+\infty[$.
\end{claim}

\begin{proof}
Take $h>0$. $P(\varepsilon+h) \geq (\frac{\varepsilon+h}{\varepsilon})^2 P(\varepsilon)$ from claim~\ref{lower_bound}, and $(\frac{\varepsilon+h}{\varepsilon})^2 P(\varepsilon) \rightarrow P(\varepsilon)$ when $h$ tends to 0. From Lemma~\ref{upper_bound}, since $P(\varepsilon+h)>0$ as $\varepsilon+h>\varepsilon^*$, we have 
\begin{align*}
P(\varepsilon) \geq (1-2 \frac{h}{\varepsilon+h}) P(\varepsilon+h) - \sqrt{l} m \frac{h}{\varepsilon+h} \sqrt{P(\varepsilon+h)}
\end{align*}
and so
\begin{align*}
\frac{1}{1-2 \frac{h}{\varepsilon+h}} (P(\varepsilon)+ \sqrt{l} m \frac{h}{\varepsilon+h} \sqrt{P(\varepsilon+h)}) \geq P(\varepsilon+h)
\end{align*}
Note that $P(\varepsilon+h) \leq P(\varepsilon+\alpha)$ for any constant $\alpha$ and small enough $h$; fix an $\alpha$, we have for $h$ small that 
\begin{align*}
\frac{1}{1-2 \frac{h}{\varepsilon+h}} (P(\varepsilon)+ \sqrt{l} m \frac{h}{\varepsilon+h} \sqrt{P(\varepsilon+\alpha)}) \geq P(\varepsilon+h)
\end{align*}
As $P(\varepsilon+\alpha)$ is finite (by the linear program and previous calculations), we have $\frac{1}{1-2 \frac{h}{\varepsilon+h}} (P(\varepsilon)+ \sqrt{l} m \frac{h}{\varepsilon+h} \sqrt{P(\varepsilon+\alpha)}) \rightarrow P(\varepsilon)$ when $h$ tends to $0$. 
\end{proof}

\begin{claim}\label{left-continuity}
$\forall \varepsilon > \varepsilon^*$, 
$$\lim_{h \to 0^-}P(\varepsilon+h) = P(\varepsilon)$$
 i.e. $P(.)$ is left-continuous on $]\varepsilon^*,+\infty[$.
\end{claim}

\begin{proof}
The proof is similar to the right-continuity one. The main difference comes from the fact that we know require $P(\varepsilon)>0$ to satisfy the condition of lemma~\ref{upper_bound} (as $\varepsilon > \varepsilon +h$ hor $h<0$), so we cannot include the $\varepsilon=\varepsilon^*$ case.
\end{proof}

\begin{claim}\label{continuity}
P(.) is continuous on $[0,+\infty[$.
\end{claim}

\begin{proof}
By claims~\ref{right-continuity} and~\ref{left-continuity}, P(.) is continuous on $[0,\varepsilon^*[$ and $]\varepsilon^*,+\infty[$. We just need to check that $P(.)$ is continuous at $\varepsilon^*$; it is right-continuous at $\varepsilon^*$ by claim~\ref{right-continuity}, and the left-continuity follows from the fact that $P(\varepsilon)=0$ $\forall \varepsilon \leq \varepsilon^*$. 
\end{proof}

\begin{claim}\label{convexity}
P(.) is convex on $[0,+\infty[$.
\end{claim}

\begin{proof}
Let $S(\varepsilon)$ be the feasible region of optimization program~\eqref{primal_program}. It is easy to see that the objective function of~\eqref{primal_program} is convex, and that the mapping $\varepsilon \rightarrow S(\varepsilon)$ is convex according to the definition of~\cite{FK1986}. Therefore, as seen in~\cite{FK1986}, the optimal value function $P(.)$ of Program~\eqref{primal_program} is convex.
\end{proof}

\begin{claim}\label{bound_variations}
Let $\varepsilon \geq \varepsilon^*$, we have for all $h > 0$
$$\frac{h+2\varepsilon}{\varepsilon^2} P(\varepsilon) \leq  \frac{P(\varepsilon+h)-P(\varepsilon)}{h} \leq \frac{2}{\varepsilon+h} P(\varepsilon+h) +  \sqrt{l} m \frac{1}{\varepsilon+h} \sqrt{P(\varepsilon+h)}$$
\end{claim}

\begin{proof}
From claim~\ref{lower_bound}, for $h>0$, 
\begin{align}
\frac{P(\varepsilon+h)-P(\varepsilon)}{h} \geq \frac{P(\varepsilon)}{h} ((\frac{\varepsilon+h}{\varepsilon})^2-1)=\frac{h+2\varepsilon}{\varepsilon^2} P(\varepsilon)
\end{align}
From Lemma~\ref{upper_bound},
\begin{align*}
\frac{P(\varepsilon+h)-P(\varepsilon)}{h} &\leq \frac{1}{h} (2 \frac{h}{\varepsilon+h} P(\varepsilon+h) +  \sqrt{l} m \frac{h}{\varepsilon+h} \sqrt{P(\varepsilon+h)})\\
&= \frac{2}{\varepsilon+h} P(\varepsilon+h) +  \sqrt{l} m \frac{1}{\varepsilon+h} \sqrt{P(\varepsilon+h)}.
\end{align*}
\end{proof}

Since $P(.)$ is a convex and continuous on $[0,+\infty[$, its right derivative $\frac{dP(\varepsilon)}{d\varepsilon}$ exists at every point $\varepsilon \geq 0$. By Claim~\ref{bound_variations}, we have
$$l(\varepsilon) \leq \frac{dP(\varepsilon)}{d\varepsilon}\leq L(\varepsilon) $$
where
\begin{align*}
&l(\varepsilon) = \lim_{h=0^+} \frac{h+2\varepsilon}{\varepsilon^2} P(\varepsilon)
\\&L(\varepsilon) = \lim_{h=0^+} \frac{2}{\varepsilon+h} P(\varepsilon+h) +  \sqrt{l} m \frac{1}{\varepsilon+h} \sqrt{P(\varepsilon+h)}
\end{align*}
if they exist. We know that $l(\varepsilon)=\frac{2}{\varepsilon} P(\varepsilon)$ exists; $L(\varepsilon)=\frac{2}{\varepsilon} P(\varepsilon)+ \frac{\sqrt{l} m}{ \varepsilon} \sqrt{P(\varepsilon)}$ exists because by claim~\ref{right-continuity}, $P(.)$ is right continuous and $\lim_{h=0^+} P(\varepsilon+h)=\varepsilon$. This implies that given an initial condition $P(\varepsilon_0)$ for some $\varepsilon_0 > \varepsilon^*$, $P(.)$ lies between the function $f$ with $f(\varepsilon_0)=P(\varepsilon_0)$ and $\frac{df(\varepsilon)}{d\varepsilon}=l(\varepsilon)$ and the function $g$ with $g(\varepsilon_0)=P(\varepsilon_0)$ and $\frac{dg(\varepsilon)}{d\varepsilon}= L(\varepsilon)$ for all $\varepsilon \geq \varepsilon_0$. We can find $f$ and $g$ by solving differential equations
\begin{align}
 &\frac{df(\varepsilon)}{d\varepsilon}=\frac{2}{\varepsilon} f(\varepsilon)\label{lower_ODE} \\
 &\frac{dg(\varepsilon)}{d\varepsilon}=  \frac{2}{\varepsilon} g(\varepsilon)+ \frac{\sqrt{l} m}{ \varepsilon} \sqrt{g(\varepsilon)}\label{upper_ODE}
\end{align}
for all $\varepsilon \geq \varepsilon_0$. It is easy to see that  with initial condition $f(\varepsilon_0)=P(\varepsilon_0)$, ODE~\eqref{lower_ODE} has as a unique solution 
$$f(\varepsilon)=P(\varepsilon_0) \frac{\varepsilon^2}{\varepsilon_0^2}$$
To solve ODE~\eqref{upper_ODE}, let us write $g(\varepsilon)=\sqrt{f(\varepsilon)}$, and note that the differential equation can be rewritten:
$$2g(\varepsilon) \frac{dg(\varepsilon)}{d\varepsilon}=\frac{2}{\varepsilon} g(\varepsilon)^2+ \frac{\sqrt{l} m}{ \varepsilon} g(\varepsilon)$$
Noting that by the choice of initial condition, $g(\varepsilon_0)=P(\varepsilon_0)>0$ and that the solution of the differential equation is necessarily increasing as the derivative is always non-negative, we have $g(\varepsilon)>0$ for all $\varepsilon \geq \varepsilon_0$. Therefore, on $[\varepsilon_0,+\infty[$, 
$$\frac{dg(\varepsilon)}{d\varepsilon}=\frac{1}{\varepsilon} g(\varepsilon)+ \frac{\sqrt{l} m}{\varepsilon}$$
The initial condition being fixed, this differential equation has a unique solution. Note that solutions to the homogeneous ODE $\frac{dg(\varepsilon)}{d\varepsilon}=\frac{1}{\varepsilon} g(\varepsilon)$ are of the form $g(\varepsilon)=C \varepsilon$, and that $g_0(\varepsilon)=-\frac{\sqrt{l} m}{2}$ is a particular solution of the ODE. Therefore, given the initial condition $g(\varepsilon_0)=\sqrt{P(\varepsilon_0)}$, we have 
$$g(\varepsilon)=(\sqrt{P(\varepsilon_0)}+\frac{\sqrt{l} m}{2})\frac{\varepsilon}{\varepsilon_0}-\frac{\sqrt{l} m}{2}$$
and thus
$$f(\varepsilon)=\Big( (\sqrt{P(\varepsilon_0)}+\frac{\sqrt{l} m}{2})\frac{\varepsilon}{\varepsilon_0}-\frac{\sqrt{l} m}{2} \Big) ^2.$$

}

\end{document}